\newtheorem{assum}{Assumption}
\newtheorem{prop}{Proposition}
\newtheorem{defn}{Definition}
\newtheorem{rem}{Remark}
\theoremstyle{plain}
\theoremstyle{definition}
\theoremstyle{remark}
\begin{document}


\title{Stabilization with Closed-loop DOA Enlargement: An Interval Analysis Approach}

\author{
\name{X. Qiu\textsuperscript{a} Z. Feng\textsuperscript{a} C. Lu\textsuperscript{a} and Y. Li\textsuperscript{a}\thanks{CONTACT Y. Li. Email: yqli@zjut.edu.cn}}
}

\maketitle

\begin{abstract}
\looseness=-1 In this paper, the stabilization problem with closed-loop domain of attraction (DOA) enlargement for discrete-time general nonlinear plants is solved. First, a sufficient condition for asymptotic stabilization and estimation of the closed-loop DOA is given. It shows that, for a given Lyapunov function, the negative-definite and invariant set in the state-control space is a stabilizing controller set and its projection along the control space to the state space can be an estimate of the closed-loop DOA. Then, an algorithm is proposed to approximate the negative-definite and invariant set for the given Lyapunov function, in which an interval analysis algorithm is used to find an inner approximation of sets as precise as desired. Finally, a solvable optimization problem is formulated to enlarge the estimate of the closed-loop DOA by selecting an appropriate Lyapunov function from a positive-definite function set. The proposed method try to find a unstructured controller set (namely, the negative-definite and invariant set) in the state-control space rather than design parameters of a structured controller in traditional synthesis methods.
\end{abstract}

\begin{keywords}
Nonlinear systems, Stabilization, Domain of attraction, Computational methods
\end{keywords}

\section{Introduction}

Stabilization is one of the fundamental problems in the control science. For nonlinear plants, due to the difficulty to achieve the global stabilization, the domain of attraction (DOA) of the closed-loop system requires extensive investigation. 

DOA is an invariant set characterizing asymptotically stabilizable area around the equilibrium, from which all state trajectories emanating converge to the equilibrium \citep{Chesi:2011,Zarei:2018_164,Zheng:2018}. It is well known that DOA plays an important role both in analysis and synthesis. In analysis, considering the autonomous systems, several efforts have been made in estimating the DOA. As summarized in \cite{Zheng:2018}, existing methods involve maximal Lyapunov functions \citep{Vannelli:1985_69,Zarei:2018_164}, compositions of Lyapunov functions \citep{Tan:2008_565}, invariance principle approach \citep{Han:2016_5569}, trajectory reverse approach \citep{Genesio:1985_747},  occupation measure approach \citep{Henrion:2014_297}, \textit{etc.}. In synthesis, a few efforts have been made in simultaneously designing the controller and enlarging the estimate of the DOA of the closed-loop. For the state-quadratic and input-linear-saturating plant, \citet{Valmorbida:2010_1196} derives a LMI-based optimization problem for computing the state feedback gains maximizing the estimate of the closed-loop DOA. For the polynomial nonlinear plant with affine inputs, \citet{Chesi:2012_3322} designs a polynomial feedback controller enlarging the estimate of the closed-loop DOA by solving a generalized eigenvalue problem. For the rational polynomial nonlinear plant, \cite{Vatani:2017_4266} proposes a sum-of-squares programming based optimization problem to design a controller enlarging the estimate of the closed-loop DOA. For the nonlinear plant with affine inputs, \cite{Saleme:2013_4074} develops an optimization strategy based on a multidimensional gridding approach and \cite{Ding:2019_2545} constructs a second-order sliding-mode control algorithm, both which provide the enlarged estimate of the closed-loop DOA. For the general nonlinear plant, \cite{Davo:2018_73} presents a switching logic combining two output feedback controllers (one renders the closed-loop locally stable and the other provides ultimate boundedness with large DOA) to enlarge the closed-loop DOA. It is assumed that the two predesigned stabilizing controllers with bounded DOA are known. For the discrete-time general nonlinear plant, \cite{Li:2014_79} proposes a method to find a stabilization controller set and an enlarged estimate of the closed-loop DOA, which is based on the random sampling and set griding technique.

However, aforementioned works about synthesis, except \cite{Davo:2018_73} and \cite{Li:2014_79}, only consider specific nonlinear plants rather than general nonlinear plants, such as polynomial nonlinearity, rational polynomial nonlinearity or affine nonlinearity. The drawback of \cite{Davo:2018_73} is that the two predesigned stabilizing controllers with bounded DOA are required, while there is no systematic method designing such controllers. The drawbacks of \cite{Li:2014_79} include: 1) due to the random sampling and set griding technique approximating sets, there is no quantitative analysis result about set estimation errors; 2) the invariant of the estimate of DOA is guaranteed by the level-set of the Lyapunov function, which in general leads to a conservative estimate.

In this paper, considering discrete-time general nonlinear plants, the stabilization with closed-loop DOA enlargement is solved. First, a sufficient condition for asymptotic stabilization and estimation of the closed-loop DOA is given, which describes the principle of Lyapunov method from a new point of view. It shows that, for a given Lyapunov function, the negative-definite and invariant set in the state-control space is a stabilizing controller set (namely, any controller belonging to this set can asymptotically stabilize the plant) and its projection along the control space to the state space can be an estimate of the closed-loop DOA. Then, an algorithm is proposed to approximate the negative-definite and invariant set for the given Lyapunov function, in which the Set Inversion Via Interval Analysis (SIVIA) algorithm is used to find an inner approximation of sets. The SIVIA algorithm is one of the basic tools in interval analysis approach, which is a kind of numerical method to approximate sets of interest as precise as desired \citep{Jaulin:2001}. Finally, a solvable optimization problem is formulated to enlarge the estimate of the closed-loop DOA by selecting an appropriate Lyapunov function from a positive-definite function set.

For the discrete-time general nonlinear stabilization problem with the closed-loop DOA enlargement, the contributions of this paper include:

\begin{itemize}
\item A new synthesis method for discrete-time nonlinear systems is proposed, which utilize the principle of Lyapunov method from a new point of view. Its central concept is the negative-definite and invariant set in the state-control space. Rather than design parameters of a structured controller in traditional synthesis methods, our method try to find a unstructured controller set in the state-control space, namely, the negative-definite and invariant set.
	
\item The SIVIA algorithm, one of basic tools in interval analysis, is introduced to obtain an inner approximation of the negative-definite and invariant set, which can guarantee the convergence of the estimation of sets and the desired approximation precision.
\end{itemize}

The rest of this paper is organized as follows. In Section 2, a sufficient condition for asymptotic stabilization and estimation of the closed-loop DOA is given, in which the negative-definite and invariant set plays an important role. In Section 3, based on theoretical result in the preceding section, the stabilization problem with the closed-loop DOA enlargement is solve by using interval analysis approach to estimate the negative-definite and invariant set. In Section 4, the simulation result is presented. The conclusion is drawn in Section 5.

\textbf{Notation: } For a vector $x \in \mathbb{R}^n$, $x_{(i)}$ represents the $i$-th element of $x$, $i = 1,2,\cdots,n$. For two vectors $x \in \mathbb{R}^n$ and $u \in \mathbb{R}^m$, $w = (x;u)$ represents a new vector in $\mathbb{R}^{n+m}$. $[w] \subset \mathbb{R}^{n+m}$ represents a box belonging to $\mathbb{R}^{n+m}$ (\textit{e.g., }a rectangular region when $n+m = 2$). $\mathbb{W} \subset \mathbb{R}^{n+m}$ represents an arbitrary compact subset of $\mathbb{R}^{n+m}$. $\hat{\mathbb{W}}$ represents an approximation of $\mathbb{W}$ by covering $\mathbb{W}$ with non-overlapping boxes in a set of boxes (there is no ambiguity when $\hat{\mathbb{W}}$ is viewed as a set of boxes or the unions of boxes according to contents). For a set $\mathbb{W} \subset \mathbb{R}^{n+m}$, $\mathrm{proj}(\mathbb{W}) \subset \mathbb{R}^n$ denotes the orthogonal projection of $\mathbb{W}$ along $\mathbb{R}^m$ to  $\mathbb{R}^n$.

\section{Negative-definite and invariant set in state-control space}

Consider the nonlinear discrete-time system
\begin{equation} \label{eq:plant}
x(k+1)=f\left(x(k),u(k)\right), \; k = 0,1,2,\cdots,
\end{equation}
where $x(k) \in \mathbb{R}^n$ is the state, $u(k) \in \mathbb{R}^m$ is the control input and $f: \mathbb{R}^n \times \mathbb{R}^m \to \mathbb{R}^n$ is a continuous function in both arguments satisfying $0 = f(0,0)$ and that its linearization at the origin is controllable. 

Our control objective is to find a nonlinear feedback controller $\mu: \mathbb{R}^n \to \mathbb{R}^m$ such that the closed-loop $x(k+1)=$ $f(x(k),\mu(x(k)))$ is asymptotically stable at $x=0$ and that the estimate of the closed-loop DOA is as large as possible.

In this section, a sufficient condition for asymptotic stabilization and estimation of the closed-loop DOA is present, which describes the principle of Lyapunov method from a new point of view.

\subsection{Negative-definite set in state-control space}

For plant \eqref{eq:plant}, we omit time instant $k$ and let $x_+, x$ and $u$ denote $x(k+1), x(k)$ and $u(k)$, respectively. Then the dynamic of plant \eqref{eq:plant} can be represented as a hyper-surface $\Pi$ in $(2n+m)$-dimensional space, which is defined as
\begin{equation}
\Pi = \Big\{(x_+;x;u) \in \mathbb{R}^{2n+m} \Big| x_+ = f(x,u)\Big\}.
\end{equation}
For a given positive-definite function $L: \mathbb{R}^n \to \mathbb{R}$, which satisfies $L(0) = 0$ and $L(x) > 0, \forall x \in \mathbb{R}/\{0\}$, a subset $\Pi_{\mathrm{N}}(L)$ of $\Pi$ can be defined as
\begin{equation}
\Pi_{\mathrm{N}}(L) = \Big\{(x_+;x;u) \in \Pi \Big| L(x_+) - L(x) < 0\Big\}.
\end{equation}

Noting that $x_+$ denotes the future state at the next time instant and $x$ denotes the current state, it is obvious that any point $(x_+;x;u) \in \Pi_{\mathrm{N}}(L)$ makes the difference of the positive-definite function $L(x(k))$ is negative-definite. Hence, the subset $\Pi_{\mathrm{N}}(L) \subset \mathbb{R}^{2n+m}$ of the hyper-surface $\Pi$ is called the negative-definite set of plant \eqref{eq:plant} for $L$. For purpose of designing controller, projecting $\Pi_{\mathrm{N}}(L)$ along the future state space onto the state-control space, we get the negative-definite set $\mathbb{W}_{\mathrm{N}}(L)$ in the state-control space, which is defined as follows.

\begin{defn} \label{defn:negative-definite set}
	A subset $\mathbb{W}_{\mathrm{N}}(L)$ of the state-control space is said to be a negative-definite set of plant \eqref{eq:plant} for the positive-definite function $L: \mathbb{R}^n \to \mathbb{R}$ if
	\begin{eqnarray}
	\mathbb{W}_{\mathrm{N}}(L) = \Big\{(x;u) \in \mathbb{R}^{n+m} \Big|L(f(x,u)) - L(x) < 0\Big\}. \label{eq:defn_ndd}
	\end{eqnarray}
\end{defn}

In general, the negative-definite set $\mathbb{W}_{\mathrm{N}}(L)$ is unbounded and open. The unboundedness is obvious from \eqref{eq:defn_ndd}. The openness is due to that the boundary $\{(x;u)|L(f(x,u)) - L(x) = 0\}$ of $\mathbb{W}_{\mathrm{N}}(L)$ is not a subset of $\mathbb{W}_{\mathrm{N}}(L)$.

From Definition~\ref{defn:negative-definite set}, it is straightforward that the difference of $L(x(k))$ is negative-definite at $k$ if $(x(k);u(k)) \in \mathbb{W}_{\mathrm{N}}(L), \forall (x(k);u(k)) \neq 0 \in \mathbb{R}^{n+m}$. Based on this, one may conclude that the closed-loop is asymptotically stable for all initial states in $\mathrm{proj}(\mathbb{W}_{\mathrm{N}}(L)) \subset \mathbb{R}^n$ if the controller $\mu$ satisfies $\forall x \in \mathrm{proj}(\mathbb{W}_{\mathrm{N}}(L)), (x;\mu(x)) \in \mathbb{W}_{\mathrm{N}}(L)$, where $\mathrm{proj}(\mathbb{W}_{\mathrm{N}}) \subset \mathbb{R}^n$ represents the projection of $\mathbb{W}_{\mathrm{N}}(L)$ along the control space to the state space. Unfortunately, this conclusion is wrong. Because it can not be guaranteed that the state is still in $\mathrm{proj}(\mathbb{W}_{\mathrm{N}}(L))$ at $k+1$. Once the state is outside of $\mathrm{proj}(\mathbb{W}_{\mathrm{N}}(L))$, the condition that the difference of $L(x(k))$ is negative-definite is no longer satisfied. This problem can be solved if an invariant subset of $\mathbb{W}_{\mathrm{N}}(L)$ could be found, in which any point $(x;u)$ can guarantee that the future state $f(x,u)$ is still in $\mathrm{proj}(\mathbb{W}_{\mathrm{N}}(L))$. This gives rise to the definition of the invariant set in the state-control space.

\subsection{Invariant set in state-control space}

For an autonomous system $f_a: \mathbb{R}^n \to \mathbb{R}^n$, a set $\mathbb{X}_{\mathrm{I}} \subset \mathbb{R}^n$ is defined to be invariant \citep{Blanchini:1999_1747} if $\forall x \in \mathbb{X}_{\mathrm{I}}, f_a(x) \in \mathbb{X}_{\mathrm{I}}$. For plant \eqref{eq:plant}, a set $\mathbb{X}_{\mathrm{I}} \subset \mathbb{R}^n$ is defined to be control invariant \citep{Blanchini:1999_1747} if $\forall x \in \mathbb{X}_{\mathrm{I}}$, there exists a control input $u \in \mathbb{R}^m$ such that $f(x,u) \in \mathbb{X}_{\mathrm{I}}$. In literatures, both the invariant set and the control invariant set are defined in the state space and well researched \citep{Bertsekas:1972_604,Blanchini:1999_1747,Rakovic:2006_546}. Here, for purpose of designing controller, we define the invariant set in the state-control space as the following.

\begin{defn} \label{defn:invaraint_set}
	A set $\mathbb{W}_{\mathrm{I}} \subset \mathbb{R}^{n+m}$ is said to be invariant for plant \eqref{eq:plant} if
	\begin{equation}
	\mathbb{W}_{\mathrm{I}} = \Big\{(x;u) \in \mathbb{R}^{n+m}\Big|f(x,u) \in \mathrm{proj}(\mathbb{W}_{\mathrm{I}})\Big\}, \label{eq:defn:invariant_set}
	\end{equation}
	where $\mathrm{proj}(\mathbb{W}_{\mathrm{I}})$ represents the projection of $\mathbb{W}_{\mathrm{I}}$ along the control space to the state space.
\end{defn}

In order to find an invariant subset $\mathbb{W}_{\mathrm{I}}$ of the interested region in the state-control space, we define a new mapping between subsets of $\mathbb{R}^{n+m}$ as follows.

\begin{defn}
	For $\mathbb{W} \subset \mathbb{R}^{n+m}$, mapping $\mathcal{I}$ is defined as
	\begin{equation}
	\mathcal{I}(\mathbb{W}) = \Big\{w \in \mathbb{W}\Big|f(w) \in \mathrm{proj}(\mathbb{W})\Big\}. \label{eq:defn:I}
	\end{equation}
	where $\mathrm{proj}(\mathbb{W})$ represents the orthogonal projection of $\mathbb{W}$ along the control space to the state space.
\end{defn}

\begin{rem}
	Mapping $\mathcal{I}$ is similar with the mapping $\mathcal{C}$ defined in (11) of \cite{Bertsekas:1972_604}, which is defined for the plants with disturbance. For $\mathbb{W} \subset \mathbb{R}^{n+m}$, its non-disturbance version is 
	\begin{equation}
	\mathcal{C}(\mathrm{proj}(\mathbb{W})) = \Big\{w \in \mathbb{W}\Big|f(w) \in \mathrm{proj}(\mathbb{W})\Big\}. \label{eq:defn:C}
	\end{equation}
	From \eqref{eq:defn:I} and \eqref{eq:defn:C}, it is obvious that the fundamental principles of mappings $\mathcal{I}$ and $\mathcal{C}$ are same. The only difference is that $\mathcal{I}$ maps subsets of the state-control space into subsets of the state-control space, while $\mathcal{C}$ maps subsets of the state space into subsets of the state-control space.
\end{rem}

The composition of mapping $\mathcal{I}$ with itself $i$ times is denoted by $\mathcal{I}^i$. Mapping $\mathcal{I}$ has the following property.

\begin{prop} \label{prop:I_infty}
	For any compact set $\mathbb{W} \subset \mathbb{R}^{n+m}$, set limit $\mathcal{I}^{\infty}(\mathbb{W}) = \lim_{i \to \infty} \mathcal{I}^{i}(\mathbb{W}) = \bigcap_{i = 1}^{\infty} \mathcal{I}^i(\mathbb{W})$ exists and is invariant for plant~\eqref{eq:plant}.
\end{prop}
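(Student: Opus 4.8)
The plan is to establish three things in turn: that the iterates $\mathcal{I}^{i}(\mathbb{W})$ form a nested, decreasing family of compact sets; that consequently the limit exists and coincides with the intersection; and that this intersection is a fixed point of $\mathcal{I}$, which is the asserted invariance. For the first point, note from \eqref{eq:defn:I} that $\mathcal{I}(\mathbb{W}) \subseteq \mathbb{W}$ for every $\mathbb{W}$, and that $\mathcal{I}$ is inclusion-preserving: if $\mathbb{V} \subseteq \mathbb{W}$ then $\mathrm{proj}(\mathbb{V}) \subseteq \mathrm{proj}(\mathbb{W})$, so $w \in \mathbb{V}$ with $f(w) \in \mathrm{proj}(\mathbb{V})$ forces $w \in \mathbb{W}$ with $f(w) \in \mathrm{proj}(\mathbb{W})$, i.e. $\mathcal{I}(\mathbb{V}) \subseteq \mathcal{I}(\mathbb{W})$. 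Feeding $\mathcal{I}(\mathbb{W}) \subseteq \mathbb{W}$ repeatedly through this monotone map yields $\mathbb{W} \supseteq \mathcal{I}(\mathbb{W}) \supseteq \mathcal{I}^{2}(\mathbb{W}) \supseteq \cdots$. Compactness propagates by induction: if $\mathbb{V}$ is compact, then $\mathrm{proj}(\mathbb{V})$ is a continuous image of a compact set, hence compact and closed, so $f^{-1}(\mathrm{proj}(\mathbb{V}))$ is closed by continuity of $f$, and $\mathcal{I}(\mathbb{V}) = \mathbb{V} \cap f^{-1}(\mathrm{proj}(\mathbb{V}))$ is closed and bounded, hence compact.

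Since a nested decreasing sequence of compact sets has a well-defined limit (in the Hausdorff metric) equal to its intersection, $\mathcal{I}^{\infty}(\mathbb{W}) = \lim_{i\to\infty}\mathcal{I}^{i}(\mathbb{W}) = \bigcap_{i=1}^{\infty}\mathcal{I}^{i}(\mathbb{W})$ exists; write $\mathbb{W}^{*}$ for it. It is compact, and nonempty whenever every iterate is, by Cantor's intersection theorem. It then remains to prove $\mathcal{I}(\mathbb{W}^{*}) = \mathbb{W}^{*}$, equivalently $f(w) \in \mathrm{proj}(\mathbb{W}^{*})$ for every $w \in \mathbb{W}^{*}$, which is the invariance property in Definition~\ref{defn:invaraint_set}. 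The inclusion $\mathcal{I}(\mathbb{W}^{*}) \subseteq \mathbb{W}^{*}$ is automatic because $\mathcal{I}$ shrinks its argument. For the reverse, fix $w \in \mathbb{W}^{*}$; for each $i \ge 0$ we have $w \in \mathcal{I}^{i+1}(\mathbb{W}) = \mathcal{I}\big(\mathcal{I}^{i}(\mathbb{W})\big)$, hence $f(w) \in \mathrm{proj}\big(\mathcal{I}^{i}(\mathbb{W})\big)$, so $f(w) \in \bigcap_{i\ge 0}\mathrm{proj}\big(\mathcal{I}^{i}(\mathbb{W})\big)$; if this intersection equals $\mathrm{proj}(\mathbb{W}^{*})$ we are done, since then $f(w) \in \mathrm{proj}(\mathbb{W}^{*})$ and, with $w \in \mathbb{W}^{*}$, $w \in \mathcal{I}(\mathbb{W}^{*})$.

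The hard part is exactly the identity $\bigcap_{i\ge 0}\mathrm{proj}\big(\mathcal{I}^{i}(\mathbb{W})\big) = \mathrm{proj}\big(\bigcap_{i\ge 0}\mathcal{I}^{i}(\mathbb{W})\big)$, since for arbitrary sets projection only guarantees "$\supseteq$" here. To obtain "$\subseteq$", take $y$ in the left-hand side and consider the fibers $F_{i} := \mathcal{I}^{i}(\mathbb{W}) \cap \{w : \mathrm{proj}(w) = y\}$: each $F_{i}$ is nonempty (because $y \in \mathrm{proj}(\mathcal{I}^{i}(\mathbb{W}))$) and compact (intersection of the compact $\mathcal{I}^{i}(\mathbb{W})$ with the closed fiber over $y$), and $F_{0} \supseteq F_{1} \supseteq \cdots$ since the iterates are nested; by Cantor's intersection theorem $\bigcap_{i}F_{i} \ne \emptyset$, and any point of it lies in $\bigcap_{i}\mathcal{I}^{i}(\mathbb{W}) = \mathbb{W}^{*}$ and projects to $y$, so $y \in \mathrm{proj}(\mathbb{W}^{*})$. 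This is the one place where compactness of $\mathbb{W}$ — propagated to every iterate — is used in an essential way; the rest is bookkeeping with \eqref{eq:defn:I} and monotonicity of $\mathcal{I}$.
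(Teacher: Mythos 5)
Your proof is correct, and its skeleton is the same as the paper's: show that $\{\mathcal{I}^i(\mathbb{W})\}$ is a nested decreasing sequence of compact sets, conclude that the set limit exists and equals $\bigcap_{i\geq 1}\mathcal{I}^i(\mathbb{W})$, and then pass to the limit in the relation $f(w)\in\mathrm{proj}(\mathcal{I}^i(\mathbb{W}))$ to get invariance. Where you genuinely add value is at the two points the paper leaves implicit. First, you prove that compactness propagates, via $\mathcal{I}(\mathbb{V})=\mathbb{V}\cap f^{-1}(\mathrm{proj}(\mathbb{V}))$ and continuity of $f$, whereas the paper simply asserts that each iterate is compact. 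Second, and more substantively, the paper's sentence ``when $i$ tends to infinite, because the set limit exists, it follows that $f(w)\in\mathrm{proj}(\mathcal{I}^{\infty}(\mathbb{W}))$'' silently uses the interchange $\bigcap_i\mathrm{proj}\big(\mathcal{I}^i(\mathbb{W})\big)=\mathrm{proj}\big(\bigcap_i\mathcal{I}^i(\mathbb{W})\big)$, which is false for general decreasing families and is exactly where compactness must be invoked; your nested-fiber argument with Cantor's intersection theorem is precisely the missing justification, so your write-up is in this respect more complete than the paper's. Two small polish points: handle the degenerate case where some iterate (hence $\mathbb{W}^{*}$) is empty separately --- invariance then holds vacuously, and the Hausdorff-metric phrasing of convergence should be replaced by the set-convergence notion the paper cites from Rockafellar and Wets; and the expression $\mathrm{proj}(w)=y$ applies $\mathrm{proj}$ to a point although the paper defines it only for sets, so writing the fiber as $\mathcal{I}^i(\mathbb{W})\cap\big(\{y\}\times\mathbb{R}^m\big)$ is cleaner.
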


\begin{proof}
	From the definition of mapping $\mathcal{I}$, we have
	\begin{eqnarray}
	\mathcal{I}^{i+1}(\mathbb{W}) = \Big\{w \in \mathcal{I}^{i}(\mathbb{W})\Big|f(w) \in \mathrm{proj}(\mathcal{I}^{i}(\mathbb{W}))\Big\} \label{eq:prop:I_infty:pf:I_i}
	\end{eqnarray}
	According to \eqref{eq:prop:I_infty:pf:I_i}, it is obvious that  $\{\mathcal{I}^i (\mathbb{W})\}$ is a monotonically decreasing sequence of sets in the sense that $\mathcal{I}^{i+1} (\mathbb{W}) \subset \mathcal{I}^i (\mathbb{W})$ for all $i \geq 1$. Moreover, $\mathcal{I}^i (\mathbb{W})$ is a compact set for all $i \geq 1$. Hence, the set limit of $\mathcal{I}^i (\mathbb{W})$ exists (see \citealp{Rockafellar:2009}, page 111).
	
	Again, from \eqref{eq:prop:I_infty:pf:I_i}, we have
	\begin{eqnarray}
	\forall w \in \mathcal{I}^{i+1}(\mathbb{W}), f(w) \in \mathrm{proj}(\mathcal{I}^{i}(\mathbb{W})). \nonumber
	\end{eqnarray}
	When $i$ tends to infinite, because the set limit of $\mathcal{I}^i (\mathbb{W})$ exists, it follows that
	\begin{eqnarray}
	\forall w \in \mathcal{I}^{\infty}(\mathbb{W}), f(w) \in \mathrm{proj}(\mathcal{I}^{\infty}(\mathbb{W})),\nonumber
	\end{eqnarray}
	which means that set $\mathcal{I}^{\infty}(\mathbb{W})$ satisfies Definition~\ref{defn:invaraint_set}. Thus $\mathcal{I}^{\infty}(\mathbb{W})$ is invariant for plant~\eqref{eq:plant}. 
\end{proof}

From Proposition~\ref{prop:I_infty}, we know that if the mapping $\mathcal{I}$ is applied to any initial compact subset $\mathbb{W}$ of the state-control space infinite times, then an invariant subset of $\mathbb{W}$ can be obtained. Hence, it is possible to use the mapping $\mathcal{I}$ to find an invariant subset of the negative-definite set $\mathbb{W}_{\mathrm{N}}(L)$ defined in Definition~\ref{defn:negative-definite set}.

\subsection{Negative-definite and invariant set in state-control space}

From Proposition~\ref{prop:I_infty}, an invariant subset of a compact set in the state-control space can be found using the mapping $\mathcal{I}$. However, the negative-definite set $\mathbb{W}_{\mathrm{N}}(L)$ defined in Definition~\ref{defn:negative-definite set} is unbounded and open, namely $\mathbb{W}_{\mathrm{N}}(L)$ is not compact. In order to guarantee the boundedness of $\mathbb{W}_{\mathrm{N}}(L)$, we introduce the following assumption.

\begin{assum} \label{assum:cons_set}
	The state and control input satisfy a set of mixed constrains
	\begin{equation*}
	(x;u) \in \mathbb{W}_\mathrm{cons} \subset \mathbb{R}^{n+m},
	\end{equation*}
	where $\mathbb{W}_\mathrm{cons}$ is a compact set.
\end{assum}

It is without loss of generality to introduce Assumption~\ref{assum:cons_set}, because these constrains typically arise due to physical limitations or safety considerations in practice. Under Assumption~\ref{assum:cons_set}, it follows that $\mathbb{W}_{\mathrm{N}}(L) \subset \mathbb{W}_\mathrm{cons}$, therefore $\mathbb{W}_{\mathrm{N}}(L)$ is bounded. The openness of $\mathbb{W}_{\mathrm{N}}(L)$ is due to that its boundary $\{(x;u)|L(f(x,u)) - L(x) = 0\}$ is not its subset, therefore we modify \eqref{eq:defn_ndd} as 
\begin{equation} 
\mathbb{W}_{\mathrm{N}}(L) = \Big\{(x;u) \in \mathbb{R}^{n+m} \Big|L(f(x,u)) - L(x) \leq -\alpha \Big\}.\label{eq:ndd}
\end{equation}
where $\alpha \in \mathbb{R}_+$ is a very small positive constant. Then, it is obvious that, under Assumption~\ref{assum:cons_set}, $\mathbb{W}_{\mathrm{N}}(L)$ defined in \eqref{eq:ndd} is bounded and closed. Hence, $\mathbb{W}_{\mathrm{N}}(L)$ defined in \eqref{eq:ndd} is compact. With the compact set $\mathbb{W}_{\mathrm{N}}(L)$, we give the following proposition.

\begin{prop}\label{prop:W_N&I}
	Under Assumption~\ref{assum:cons_set}, the subset $\mathbb{W}_{\mathrm{N\&I}}(L)$ of the state-control space defined as
	\begin{equation} 
	\mathbb{W}_{\mathrm{N\&I}}(L) = \mathcal{I}^\infty(\mathbb{W}_{\mathrm{N}}(L)) \label{eq:W_N&I}
	\end{equation} 
	is a negative-definite and invariant set for plant \eqref{eq:plant} and positive-definite function $L: \mathbb{R}^n \to \mathbb{R}$, where the negative-definite set $\mathbb{W}_{\mathrm{N}}(L)$ is defined in \eqref{eq:ndd}.
\end{prop}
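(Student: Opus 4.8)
The plan is to check the two defining attributes of $\mathbb{W}_{\mathrm{N\&I}}(L)$ — negative-definiteness and invariance — separately, each of which reduces to something already in hand.

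\emph{Invariance.} The discussion preceding the statement establishes that, under Assumption~\ref{assum:cons_set} and with \eqref{eq:defn_ndd} replaced by \eqref{eq:ndd}, the negative-definite set $\mathbb{W}_{\mathrm{N}}(L)$ is compact: it is bounded because $\mathbb{W}_{\mathrm{N}}(L)\subseteq\mathbb{W}_{\mathrm{cons}}$, and it is closed because the sublevel condition $L(f(x,u))-L(x)\le-\alpha$ is nonstrict while $f$ and $L$ are continuous. Hence Proposition~\ref{prop:I_infty} applies verbatim with $\mathbb{W}=\mathbb{W}_{\mathrm{N}}(L)$, and it gives directly that $\mathcal{I}^{\infty}(\mathbb{W}_{\mathrm{N}}(L))$ exists and satisfies Definition~\ref{defn:invaraint_set}; that is, $\mathbb{W}_{\mathrm{N\&I}}(L)$ is invariant for plant~\eqref{eq:plant}.

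\emph{Negative-definiteness.} From the definition \eqref{eq:defn:I}, $\mathcal{I}(\mathbb{W})\subseteq\mathbb{W}$ for every $\mathbb{W}\subseteq\mathbb{R}^{n+m}$, since $\mathcal{I}(\mathbb{W})$ is obtained by imposing one extra constraint on the points of $\mathbb{W}$. Iterating this (equivalently, reading off the nested chain recorded in \eqref{eq:prop:I_infty:pf:I_i} in the proof of Proposition~\ref{prop:I_infty}), one gets $\mathcal{I}^{i}(\mathbb{W}_{\mathrm{N}}(L))\subseteq\mathbb{W}_{\mathrm{N}}(L)$ for all $i\ge1$, whence
\[
\mathbb{W}_{\mathrm{N\&I}}(L)=\bigcap_{i=1}^{\infty}\mathcal{I}^{i}(\mathbb{W}_{\mathrm{N}}(L))\subseteq\mathbb{W}_{\mathrm{N}}(L).
\]
Therefore every $(x;u)\in\mathbb{W}_{\mathrm{N\&I}}(L)$ satisfies $L(f(x,u))-L(x)\le-\alpha<0$, so $\mathbb{W}_{\mathrm{N\&I}}(L)$ is a negative-definite set in the sense of Definition~\ref{defn:negative-definite set}. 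Combining the two parts yields the claim.

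I do not anticipate any genuine difficulty here: the proposition is essentially a corollary of Proposition~\ref{prop:I_infty} together with the trivial monotonicity $\mathcal{I}(\mathbb{W})\subseteq\mathbb{W}$. The one step worth stating with care is the legitimacy of invoking Proposition~\ref{prop:I_infty}, namely that the \emph{modified} set \eqref{eq:ndd} is truly compact — closedness from the nonstrict inequality plus continuity of $f,L$, and boundedness from Assumption~\ref{assum:cons_set} — which is precisely why \eqref{eq:defn_ndd} was replaced by \eqref{eq:ndd} and why the mixed-constraint set was introduced. Note that nothing about nonemptiness of $\mathbb{W}_{\mathrm{N\&I}}(L)$ is asserted, so the controllability hypothesis on the linearization plays no role at this point.
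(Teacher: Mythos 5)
Your proposal is correct and follows essentially the same route as the paper's own proof: compactness of the modified set \eqref{eq:ndd} under Assumption~\ref{assum:cons_set} lets you invoke Proposition~\ref{prop:I_infty} for existence of the limit and invariance, and the nested chain $\mathcal{I}^{i+1}(\mathbb{W}_{\mathrm{N}}(L))\subseteq\mathcal{I}^{i}(\mathbb{W}_{\mathrm{N}}(L))\subseteq\mathbb{W}_{\mathrm{N}}(L)$ gives negative-definiteness of the intersection. Your added remarks on why \eqref{eq:ndd} is closed and bounded merely make explicit what the paper states in the text preceding the proposition.
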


\begin{proof}
	Under Assumption~\ref{assum:cons_set}, the negative-definite set $\mathbb{W}_{\mathrm{N}}(L)$ defined in \eqref{eq:ndd} is compact. Following Proposition~\ref{prop:I_infty}, we have that the set limit $\mathcal{I}^\infty(\mathbb{W}_{\mathrm{N}}(L)) = \lim_{i \to \infty} \mathcal{I}^{i}(\mathbb{W}_{\mathrm{N}}(L))$ exists and is invariant for plant~\eqref{eq:plant}.
	
	From the definition of the mapping $\mathcal{I}$, we know that
	\begin{equation*}
	\mathcal{I}^\infty(\mathbb{W}_{\mathrm{N}}(L))\subseteq \cdots \subseteq \mathcal{I}^2(\mathbb{W}_{\mathrm{N}}(L)) \subseteq \mathcal{I}(\mathbb{W}_{\mathrm{N}}(L)) \subseteq \mathbb{W}_{\mathrm{N}}(L).
	\end{equation*}
	The above relations mean that $\mathbb{W}_{\mathrm{N\&I}}(L)$ is a subset of $\mathbb{W}_{\mathrm{N}}(L)$. Hence, $\mathbb{W}_{\mathrm{N\&I}}(L)$ is negative-definite for plant \eqref{eq:plant} and positive-definite function $L$.
\end{proof}

\subsection{Sufficient condition for stabilization and estimation of closed-loop DOA}

Because $\mathbb{W}_{\mathrm{N\&I}} (L)$ is negative-definite, from \eqref{eq:ndd}, $\forall (x(k);u(k)) \in \mathbb{W}_{\mathrm{N\&I}}(L)$, the difference of $L(x(k))$ is negative-definite at $k$. Because $\mathbb{W}_{\mathrm{N\&I}} (L)$ is invariant, from Definition~\ref{defn:invaraint_set}, $\forall (x(k);u(k)) \in \mathbb{W}_{\mathrm{N\&I}}(L)$, the future state $x(k+1)$ is in $\mathrm{proj}(\mathbb{W}_{\mathrm{N\&I}}(L))$ at $k+1$. This means that, for $x(k+1)$, there exists $u(k+1)$ such that the difference of $L(x(k+1))$ is also negative-definite at $k+1$. Hence, we can conclude that the closed-loop system is asymptotically stable for all initial states in $\mathrm{proj}(\mathbb{W}_{\mathrm{N\&I}}(L)) \subset \mathbb{R}^n$ if the controller $\mu$ satisfies $\forall x \in \mathrm{proj}(\mathbb{W}_{\mathrm{N\&I}}(L)), (x;\mu(x)) \in \mathbb{W}_{\mathrm{N\&I}}(L)$. This idea is summarized in Proposition~\ref{prop:stabilization}.

\begin{prop} \label{prop:stabilization}
	If set $\mathbb{W}_{\mathrm{N\&I}} (L) \subseteq \mathbb{R}^{n+m}$ is negative-definite and invariant for plant~\eqref{eq:plant} and Lyapunov function $L: \mathbb{R}^n \to \overline{\mathbb{R}}_+$, then, for any controller $\mu: \mathbb{R}^n \to \mathbb{R}^m$ satisfying
	\begin{equation}
	0 = \mu(0), (x; \mu(x)) \in \mathbb{W}_{\mathrm{N\&I}}(L), \forall x \in \mathrm{proj} (\mathbb{W}_{\mathrm{\mathrm{N\&I}}} (L)), \label{eq:prop:stab:mu_subset_W}
	\end{equation}
	the closed-loop system $x(k+1) = f\big(x(k),\mu(x(k))\big)$ is asymptotically stable for any initial state in $\mathrm{proj} (\mathbb{W}_{\mathrm{\mathrm{N\&I}}} (L))$.
\end{prop}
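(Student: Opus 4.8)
The plan is to establish asymptotic stability via a Lyapunov-type argument, using $L$ restricted to $\mathrm{proj}(\mathbb{W}_{\mathrm{N\&I}}(L))$ as the Lyapunov function for the closed loop. First I would fix an arbitrary initial state $x(0) \in \mathrm{proj}(\mathbb{W}_{\mathrm{N\&I}}(L))$ and verify by induction that the closed-loop trajectory remains in this set for all $k \geq 0$: if $x(k) \in \mathrm{proj}(\mathbb{W}_{\mathrm{N\&I}}(L))$, then by \eqref{eq:prop:stab:mu_subset_W} we have $(x(k);\mu(x(k))) \in \mathbb{W}_{\mathrm{N\&I}}(L)$, and since $\mathbb{W}_{\mathrm{N\&I}}(L)$ is invariant (Definition~\ref{defn:invaraint_set}), $x(k+1) = f(x(k),\mu(x(k))) \in \mathrm{proj}(\mathbb{W}_{\mathrm{N\&I}}(L))$. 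This closes the induction and guarantees the trajectory is well-defined and confined.

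Next I would extract the decrease condition: along this trajectory, $(x(k);\mu(x(k))) \in \mathbb{W}_{\mathrm{N\&I}}(L)$, and since $\mathbb{W}_{\mathrm{N\&I}}(L)$ is negative-definite, \eqref{eq:ndd} gives $L(x(k+1)) - L(x(k)) = L(f(x(k),\mu(x(k)))) - L(x(k)) \leq -\alpha$ whenever the trajectory has not yet reached the origin. Hence $\{L(x(k))\}$ is strictly decreasing while $x(k) \neq 0$; combined with $L(x(k)) \geq 0$, this forces convergence of $L(x(k))$ and, via the uniform strict decrease by $\alpha$, forces $x(k) \to 0$ (indeed $x(k)$ reaches a neighborhood of the origin in finitely many steps, and one treats the case $x(k)=0$ using $0 = \mu(0)$ and $0 = f(0,0)$ so the origin is an equilibrium that is never left). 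For Lyapunov stability of the origin (the $\varepsilon$–$\delta$ part), I would invoke continuity and positive-definiteness of $L$: since $L(0)=0$ and $L$ is positive-definite, sublevel sets of $L$ near the origin are small neighborhoods, and monotone non-increase of $L$ along trajectories confines trajectories started near the origin to such sublevel sets — the standard Lyapunov stability argument for discrete-time systems.

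The main obstacle, and the point requiring the most care, is the subtlety that $L$ need only be positive-definite (not necessarily radially unbounded or continuous at issue), and that the negative-definite condition \eqref{eq:ndd} uses a uniform $-\alpha$ rather than a function vanishing only at the origin; this actually \emph{helps} with attractivity (giving finite-time entry into any neighborhood) but one must reconcile it with the origin itself, where $L(f(0,0)) - L(0) = 0 \not\leq -\alpha$, so the origin is not in $\mathbb{W}_{\mathrm{N\&I}}(L)$ in the strict sense — the argument must therefore treat $\mathrm{proj}(\mathbb{W}_{\mathrm{N\&I}}(L))$ as containing the origin as a limit point and handle the ``last step'' into the origin by the equilibrium property $0 = f(0,\mu(0))$ rather than by the decrease inequality. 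A clean way is to prove attractivity first (the trajectory enters every sublevel set $\{L \leq c\}$ in finite time because $L$ drops by at least $\alpha$ each step), then prove stability separately from monotonicity of $L$, and finally combine the two into asymptotic stability; I would also note that one should assume $0 \in \mathrm{proj}(\mathbb{W}_{\mathrm{N\&I}}(L))$ (or that it is in its closure) so the stability statement is non-vacuous near the origin, which is consistent with the controllability-of-linearization hypothesis on $f$ ensuring a neighborhood of the origin lies in the negative-definite set for a suitably chosen $L$.
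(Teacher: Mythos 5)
Your overall skeleton matches the paper's proof: you use invariance (Definition~\ref{defn:invaraint_set}) together with \eqref{eq:prop:stab:mu_subset_W} to show by induction that the closed-loop trajectory never leaves $\mathrm{proj}(\mathbb{W}_{\mathrm{N\&I}}(L))$, and negative-definiteness to get a decrease of $L$ along the trajectory; the paper does exactly these two steps and then concludes by monotonicity of $L(\phi(x_0,k))$, boundedness below by zero, and the standard discrete-time Lyapunov argument (it cites Theorem 13.2 of Haddad and Chellaboina for the implication $L(\phi(x_0,k))\to 0 \Rightarrow \phi(x_0,k)\to 0$). One difference in your favour: you make the $\varepsilon$--$\delta$ stability half explicit, which the paper leaves to the citation. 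The important difference against you is that the paper works throughout with the strict inequality $L(f(x,\mu(x)))-L(x)<0$ (Definition~\ref{defn:negative-definite set}) and never uses the constant $\alpha$, whereas your attractivity argument is built on the uniform drop by $\alpha$ from \eqref{eq:ndd}.

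That $\alpha$-based step is where your proposal breaks down. Under \eqref{eq:ndd}, any $x$ with $L(x)<\alpha$ cannot belong to $\mathrm{proj}(\mathbb{W}_{\mathrm{N}}(L))\supseteq \mathrm{proj}(\mathbb{W}_{\mathrm{N\&I}}(L))$, because $L(f(x,u))\leq L(x)-\alpha<0$ is impossible; hence the entire neighborhood $\{x: L(x)<\alpha\}$ of the origin is disjoint from the projection, so the origin is not even a limit point of it, and your suggested hypothesis that $0$ lies in $\mathrm{proj}(\mathbb{W}_{\mathrm{N\&I}}(L))$ or its closure is unsatisfiable. At the same time, invariance confines the trajectory to the projection for \emph{every} $k$, so there is no ``last step into the origin'' to be handled by $f(0,\mu(0))=0$: the drop by $\alpha$ applies at every step, giving $L(x(k))\leq L(x(0))-k\alpha<0$ for $k$ large, a contradiction with $L\geq 0$. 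What this argument actually establishes is that no nonempty set can be both invariant in the sense of Definition~\ref{defn:invaraint_set} and negative-definite in the $-\alpha$ sense, i.e.\ with that reading the proposition is vacuous; it does not yield $x(k)\to 0$. To obtain a non-vacuous proof you must follow the paper's route: take negative-definiteness in the strict form $L(f(x,\mu(x)))-L(x)<0$, conclude $L(\phi(x_0,k))$ is strictly decreasing and bounded below, and then justify $L(\phi(x_0,k))\to 0$ (this needs the compactness/continuity contradiction argument, not merely monotone convergence --- the step the paper delegates to Haddad's Theorem 13.2), after which your sublevel-set argument can serve as the stability half.
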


\begin{proof}
	Because $\mathbb{W}_{\mathrm{N\&I}} (L)$ is negative-definite for plant~\eqref{eq:plant} and Lyapunov function $L$, from \eqref{eq:ndd} and \eqref{eq:prop:stab:mu_subset_W}, it follows that
	\begin{equation}
	\forall x \in \mathrm{proj} (\mathbb{W}_{\mathrm{\mathrm{N\&I}}} (L)), L\big(f\left(x,\mu(x)\right)\big) - L(x) < 0. \label{eq:prop:stab:pf:negative-definite}
	\end{equation}
	Because $\mathbb{W}_{\mathrm{N\&I}} (L)$ is invariant for plant~\eqref{eq:plant}, from definition~\ref{defn:invaraint_set} and \eqref{eq:prop:stab:mu_subset_W}, it follows that
	\begin{equation}
	\forall x \in \mathrm{proj} (\mathbb{W}_{\mathrm{\mathrm{N\&I}}} (L)), f\left(x,\mu(x)\right) \in \mathrm{proj} (\mathbb{W}_{\mathrm{\mathrm{N\&I}}} (L)). \label{eq:prop:stab:pf:invariant}
	\end{equation}
	For plant~\eqref{eq:plant}, let $\phi(x_0,k)$ denote the solution of $x(k+1) = f\big(x(k),\mu(x(k))\big)$ at time $k$ with the initial state $x_0$. From \eqref{eq:prop:stab:pf:negative-definite} and \eqref{eq:prop:stab:pf:invariant}, it is obvious that
	\begin{equation}
	\forall x_0 \in \mathrm{proj} (\mathbb{W}_{\mathrm{\mathrm{N\&I}}} (L)), \forall k, L(\phi(x_0,k+1)) < L(\phi(x_0,k)). \label{eq:prop:stab:pf:L(k+1)<L(k)} \nonumber
	\end{equation}
	The above relation shows that, $\forall x_0 \in \mathrm{proj} (\mathbb{W}_{\mathrm{\mathrm{N\&I}}} (L))$, $L(\phi(x_0,k))$ is monotonically decreasing with time. And because $L$ is positive-definite, $L(\phi(x_0,k))$ is bounded from below by zero. Hence, 
	\begin{equation}
	\forall x_0 \in \mathrm{proj} (\mathbb{W}_{\mathrm{\mathrm{N\&I}}} (L)), \lim_{k \to \infty} L(\phi(x_0,k)) = 0. \nonumber
	\end{equation}
	From the above equation, we can derive that
	\begin{equation}
	\forall x_0 \in \mathrm{proj} (\mathbb{W}_{\mathrm{\mathrm{N\&I}}} (L)), \lim_{k \to \infty} \phi(x_0,k) = 0. \nonumber
	\end{equation}
	This can be proven by reductio ad absurdum (details see the proof of Theorem 13.2 in \citealp{Haddad:2008}). 
\end{proof}

From Proposition~\ref{prop:stabilization}, we know that, if the negative-definite and invariant set $\mathbb{W}_{\mathrm{N\&I}} (L) \subset \mathbb{R}^{n+m}$ can be obtained, any controller $\mu$ satisfying $(x;\mu(x)) \in \mathbb{W}_{\mathrm{N\&I}} (L)$ can asymptotically stabilize plant \eqref{eq:plant} and $\mathrm{proj}(\mathbb{W}_{\mathrm{N\&I}}(L)) \subset \mathbb{R}^n$ can be an estimate of the closed-loop DOA. However, due to nonlinearities of $f$ and $L$, it is hard to obtain analytic solution of $\mathbb{W}_{\mathrm{N\&I}} (L)$. In the next section, we use the interval analysis approach to approximate the negative-definite and invariant set $\mathbb{W}_{\mathrm{N\&I}} (L)$.

\section{Stabilization with closed-loop DOA enlargement: Interval analysis approach}

Based on the preceding section, now we use interval analysis approach to solve the stabilization problem with the closed-loop DOA enlargement. Firstly, interval analysis approach, especially the SIVIA algorithm, is briefly introduced. Then, an algorithm, estimating the negative-definite and invariant set $\mathbb{W}_{\mathrm{N\&I}}(L)$ for the given positive-definite function $L$ based on the SIVIA algorithm, is proposed. Thirdly, with the estimate of $\mathbb{W}_{\mathrm{N\&I}}(L)$, the closed-loop DOA is estimated and the controller is designed. Finally, the estimate of the closed-loop DOA is enlarged by selecting an appropriate Lyapunov function from a positive-definite function set.

\subsection{Interval analysis}

Interval analysis is a kind of guaranteed numerical method for approximating sets. Guaranteed means here that approximations of sets of interest are obtained, which can be made as precise as desired \citep{Jaulin:2001}. Algorithm SIVIA is one of basic tools in interval analysis, which uses a covering with non-overlapping boxes to approximate set inversion $\mathbb{Z} \subset \mathbb{R}^{n_1}$ defined by function $p: \mathbb{R}^{n_1} \to \mathbb{R}^{n_2}$ and set $\mathbb{Y} \subset \mathbb{R}^{n_2}$ as
\begin{equation}
\mathbb{Z} = \Big\{z \in \mathbb{R}^{n_1} \Big| p(z) \in \mathbb{Y} \Big\}. \label{eq:problem_SIVIA}
\end{equation}

The fundamental of SIVIA is the concept of interval vectors and inclusion functions. We briefly introduce these concepts (more details see \citealt{Jaulin:2001}). An interval vector $[z]$ is a subset of $\mathbb{R}^{n_1}$, which is defined as $[z] = [z_{(1)}] \times [z_{(2)}] \times \cdots \times [z_{(n_1)}]$, where the $j$-th interval $[z_{(j)}] = [\underline{z}_{(j)},\bar{z}_{(j)}], j = 1,2,\cdots,n_1$, is a connected subset of $\mathbb{R}$, $\underline{z}_{(j)}$ and $\bar{z}_{(j)}$ are the lower and the upper bound of the interval $[z_{(j)}]$. $[z] \in \mathbb{IR}^{n_1}$ is also called a box, where $\mathbb{IR}^{n_1}$ denotes the set of all $n_1$-dimensional boxes. Considering function $p: \mathbb{R}^{n_1} \to \mathbb{R}^{n_2}$, the interval function $[p]: \mathbb{IR}^{n_1} \to \mathbb{IR}^{n_2}$ is an inclusion function for $p$ if $\forall [z] \in \mathbb{IR}^{n_1}, p([z]) \subset [p]([z])$. An inclusion function $[p]$ is convergent if $\forall [z] \in \mathbb{IR}^{n_1}, \lim_{d([z]) \to 0}d([p]([z])) = 0$, where $d([z]) = \max_{1 \leq j \leq n_1} (\bar{z}_{(j)} - \underline{z}_{(j)})$ denotes the width of box $[z]$. The convergent inclusion function can guarantee the convergence of SIVIA algorithm. For a function, its convergent inclusion function is not unique, \textit{e.g.}, natural form, centered form and Taylor form.

SIVIA can find an inner approximation $\hat{\mathbb{Z}}_{\mathrm{in}} \subset \mathbb{Z}_{\mathrm{init}} \subset \mathbb{R}^{n_1}$ of $\mathbb{Z}$, where $\mathbb{Z}_{\mathrm{init}}$ is a given initial search set. SIVIA performs a recursive exploration (the while loop in Algorithm~\ref{alg:Sivia}), in which four cases may be encountered for a given box $[z] \subset \mathbb{R}^{n_1}$.

\begin{itemize}
	\item Inner test: if $[p]([z])$ is entirely in $\mathbb{Y}$, then $[z]$ is entirely in $\mathbb{Z}$, and is stored in set $\hat{\mathbb{Z}}_{\mathrm{in}}$ collecting boxes inside $\mathbb{Z}$, as shown in Line 7-8 in Algorithm~\ref{alg:Sivia}.
	\item Outer test: if $[p]([z])$ has an empty intersection with $\mathbb{Y}$, then $[z]$ does not belong to $\mathbb{Z}$, and is stored in set $\hat{\mathbb{Z}}_{\mathrm{out}}$ collecting boxes outside $\mathbb{Z}$, as shown in Line 9-10 in Algorithm~\ref{alg:Sivia}.
	\item If $[p]([z])$ has a non-empty intersection with $\mathbb{Y}$, but is not entirely in $\mathbb{Y}$, then $[z]$ contains the boundary of $\mathbb{Z}$; $[z]$ is said to be undetermined. If the width of $[z]$ is lower than a prespecified parameter $\epsilon > 0$, then it is deemed small enough to be stored in set $\hat{\mathbb{Z}}_{\mathrm{bou}}$ collecting boxes containing the boundary of $\mathbb{Z}$, as shown in Line 11-12 in Algorithm~\ref{alg:Sivia}.
	\item If $[z]$ is undetermined and its width is greater than $\epsilon$, then $[z]$ should be bisected and the two newly generated boxes are stored in set $\hat{\mathbb{Z}}_\mathrm{do}$ collecting boxes needing further exploration, as shown in Line 13-15 in Algorithm~\ref{alg:Sivia}. The exploration should be recursively implemented until set $\hat{\mathbb{Z}}_\mathrm{do}$ is empty.
\end{itemize}

Two tips should be noted when SIVIA is implemented. 1) In order to guarantee that the inner and outer tests can be implemented, set $\mathbb{Y}$ and $\mathbb{Z}_\mathrm{init}$ should be approximately represented by set $\hat{\mathbb{Y}}$ and $\hat{\mathbb{Z}}_{\mathrm{init}}$ of boxes, respectively. This is without loss of generality, because any compact set can be arbitrarily approximated by unions of boxes. 2) It is important to organize the storage of the set of boxes (such as $\hat{\mathbb{Y}}, \hat{\mathbb{Z}}_{\mathrm{init}}, \hat{\mathbb{Z}}_{\mathrm{in}}, \textit{etc.}$). The first idea would be to store the set of boxes as a list. Another more efficient organization of the set of boxes should be a binary tree, which also called paving or subpaving (more details see \citealt{Jaulin:2001,Kieffer:2001}).

According to Theorem 3.1 in \cite{Jaulin:2001}, under continuity condition, the inner approximate $\hat{\mathbb{Z}}_{\mathrm{in}}$ converges to set $\mathbb{Z}$ when $\epsilon$ tends to zero. This means that SIVIA can approximate set $\mathbb{Z}$ with an arbitrary precision.

\begin{algorithm} 
	\caption{Set inversion via interval analysis} \label{alg:Sivia}
	\begin{algorithmic}[1]
		\Procedure{Sivia}{$p, \hat{\mathbb{Y}}, \hat{\mathbb{Z}}_{\mathrm{init}}, \epsilon$}
		\State $\hat{\mathbb{Z}}_{\mathrm{in}} := \emptyset, \hat{\mathbb{Z}}_{\mathrm{out}} := \emptyset, \hat{\mathbb{Z}}_\mathrm{bou} := \emptyset$
		\State $\hat{\mathbb{Z}}_{\mathrm{do}} := \hat{\mathbb{Z}}_{\mathrm{init}}$
		\While {$\hat{\mathbb{Z}}_{\mathrm{do}} \neq \emptyset$}
		\State Get a box $[z]$ from $\hat{\mathbb{Z}}_{\mathrm{do}}$
		\State Remove $[z]$ from $\hat{\mathbb{Z}}_{\mathrm{do}}$
		\If {$[p]([z]) \subset \hat{\mathbb{Y}}$}
		\State Add $[z]$ to set $\hat{\mathbb{Z}}_{\mathrm{in}}$
		\ElsIf {$[p]([z]) \cap \hat{\mathbb{Y}} = \emptyset$}
		\State Add $[z]$ to set $\hat{\mathbb{Z}}_{\mathrm{out}}$
		\ElsIf {$d([z]) < \epsilon$}
		\State Add $[z]$ to set $\hat{\mathbb{Z}}_{\mathrm{bou}}$
		\Else
		\State Bisect box $[z]$
		\State Add the two new boxes to set $\hat{\mathbb{Z}}_{\mathrm{do}}$
		\EndIf
		\EndWhile
		\State \textbf{return} $\hat{\mathbb{Z}}_{\mathrm{in}}$
		\EndProcedure
	\end{algorithmic}
\end{algorithm}

\subsection{Approximate negative-definite and invariant set by SIVIA}

In this subsection, an algorithm, estimating the negative-definite and invariant set $\mathbb{W}_{\mathrm{N\&I}}(L)$ for the given positive-definite function $L$ based on the SIVIA algorithm, is proposed. For a given Lyapunov function $L$, the negative-definite set $\mathbb{W}_N(L) \subset \mathbb{R}^n$ defined in \eqref{eq:ndd} for plant~\eqref{eq:plant} can be approximated using SIVIA algorithm. The difference $\Delta L: \mathbb{R}^{n+m} \to \mathbb{R}$ of Lyapunov function $L$ can be defined as $\Delta L (w) = L(f(w)) - L(x)$, where $w = (x;u) \in \mathbb{R}^{n+m}$. With $\Delta L$, \eqref{eq:ndd} can be rewritten as

\begin{equation}
\mathbb{W}_N(L) = \Big\{w \in \mathbb{R}^{n+m}\Big|\Delta L(w) \in [-\infty,-\alpha]\Big\}. \label{eq:ndd_Delta_L} 
\end{equation}
Comparing \eqref{eq:problem_SIVIA} and \eqref{eq:ndd_Delta_L}, an inner approximation $\hat{\mathbb{W}}_N(L)$ of negative-definite set $\mathbb{W}_N(L)$ in the constrains set $\mathbb{W}_{\mathrm{cons}}$ could be obtained using 
\begin{equation*}
\hat{\mathbb{W}}_N(L) := \mathrm{SIVIA}(\Delta L, \{[-\infty,-\alpha]\},\hat{\mathbb{W}}_{\mathrm{cons}},\epsilon),
\end{equation*}

as shown in Line 2-3 of Algorithm \ref{alg:est_W_{N&I}_(L)}. Here, the constrains set $\mathbb{W}_{\mathrm{cons}}$ is defined in Assumption~\ref{assum:cons_set} and $\hat{\mathbb{W}}_{\mathrm{cons}}$ is an approximation of $\mathbb{W}_{\mathrm{cons}}$ by a set of boxes.

According to Proposition~\ref{prop:W_N&I}, $\mathbb{W}_\mathrm{N\&I}(L) = \mathcal{I}^{\infty}(\mathbb{W}_\mathrm{N}(L))$ is an invariant subset of $\mathbb{W}_\mathrm{N}(L)$. It is needed to approximate the operation of mapping $\mathcal{I}$ using SIVIA algorithm. Comparing \eqref{eq:problem_SIVIA} and \eqref{eq:defn:I}, for a given set $\hat{\mathbb{W}}_1$ of boxes, an inner approximation $\hat{\mathbb{W}}_2$ of $\mathcal{I}(\hat{\mathbb{W}}_1)$ could be obtained using 
\begin{equation*}
\hat{\mathbb{W}}_2 := \mathrm{SIVIA}(f, \mathrm{proj}(\hat{\mathbb{W}}_1),\hat{\mathbb{W}}_1,\epsilon).
\end{equation*}
With the initial set $\hat{\mathbb{W}}_\mathrm{N}(L)$, recursively using SIVIA algorithm to approximate mapping $\mathcal{I}$, an inner approximation $\hat{\mathbb{W}}_{\mathrm{N\&I}}(L)$ of the negative-definite and invariant set $\mathbb{W}_{\mathrm{N\&I}}(L)$ defined in \eqref{eq:W_N&I} can be obtained, as shown in Line 4-10 of Algorithm \ref{alg:est_W_{N&I}_(L)}.

\begin{algorithm} 
	\caption{Inner Approximation of $\mathbb{W}_{\mathrm{N\&I}}(L)$} \label{alg:est_W_{N&I}_(L)}
	\begin{algorithmic}[1]
		\Procedure{EstNIset}{$f, L, \hat{\mathbb{W}}_{\mathrm{cons}}, \epsilon$}
		\State $\Delta L(w) := L(f(w)) - L(x)$ 
		\State $\hat{\mathbb{W}}_N(L) := \mathrm{SIVIA}(\Delta L, \{[-\infty, -\alpha]\}, \hat{\mathbb{W}}_{\mathrm{cons}}, \epsilon)$
		\State $\hat{\mathbb{W}}_1 := \hat{\mathbb{W}}_N(L)$
		\State $\hat{\mathbb{W}}_2 := \emptyset$
		\While{$\hat{\mathbb{W}}_1 \neq \hat{\mathbb{W}}_2$}
		\State $\hat{\mathbb{W}}_2 := \hat{\mathbb{W}}_1$
		\State $\hat{\mathbb{W}}_1 := \mathrm{SIVIA}(f, \mathrm{proj}(\hat{\mathbb{W}}_2), \hat{\mathbb{W}}_2, \epsilon)$	
		\EndWhile
		\State $\hat{\mathbb{W}}_{\mathrm{N\&I}}(L) := \hat{\mathbb{W}}_1$
		\State \textbf{return} $\hat{\mathbb{W}}_{\mathrm{N\&I}}(L)$
		\EndProcedure
	\end{algorithmic}
\end{algorithm} 

\subsection{Estimate closed-loop DOA and design controller} \label{sec:estimate_DOA}

With the estimate $\hat{\mathbb{W}}_{\mathrm{N\&I}}(L)$ of the negative-definite and invariant set $\mathbb{W}_{\mathrm{N\&I}}(L)$, a method of estimating the closed-loop DOA and designing the controller is proposed in this subsection. Since $L(f(0,0)) - L(0) = 0$, the origin $0 \in \mathbb{R}^{n+m}$ is in the boundary of $\mathbb{W}_{\mathrm{N\&I}}(L)$. Hence, there is no box belonging to the inner approximation $\hat{\mathbb{W}}_{\mathrm{N\&I}}(L)$ nearby the origin $0 \in \mathbb{R}^{n+m}$ and there is a small neighborhood $\mathbb{X}_0$ of the origin $0 \in \mathbb{R}^{n}$ that is not contained by $\mathrm{proj}(\hat{\mathbb{W}}_{\mathrm{N\&I}}(L))$. The size of the neighborhood $\mathbb{X}_0$ depends on parameter $\epsilon$ in Algorithm~\ref{alg:est_W_{N&I}_(L)}. Since we suppose that the linearization of plant \eqref{eq:plant} is controllable at the origin, there must exist a linear controller which is able to stabilize all state in $\mathbb{X}_0$ when the size of $\mathbb{X}_0$ is small enough. As a result, the estimate of the closed-loop DOA should be $\mathrm{proj}(\hat{\mathbb{W}}_{\mathrm{N\&I}}(L)) \cup \mathbb{X}_0$ and the corresponding state feedback controller is 
\begin{equation} \label{eq:controller}
\mu(x) = \left\{ 
\begin{array}{ll}
Kx, & \mathrm{if \ } x \in \mathbb{X}_0 \\
\tilde{\mu}(x), & \mathrm{if \ } x \in \mathrm{proj}(\hat{\mathbb{W}}_{\mathrm{N\&I}}(L))
\end{array}
\right.,
\end{equation}
where $K \in \mathbb{R}^{m \times n}$ is obtained by the linear controller design method according to the linearization of plant \eqref{eq:plant} and $\tilde{\mu}: \mathbb{R}^n \to \mathbb{R}^m$ satisfies $\forall x \in \mathrm{proj}(\hat{\mathbb{W}}_{\mathrm{N\&I}}(L)), (x;\tilde{\mu}(x)) \in \hat{\mathbb{W}}_{\mathrm{N\&I}}(L)$. A simple way to find such a $\tilde{\mu}$ is that, first, select a training set belonging to $\hat{\mathbb{W}}_{\mathrm{N\&I}}(L)$; then, obtain $\tilde{\mu}$ with a function estimation method, such as interpolation, Gaussian processes regression and so on. When the trend of the training data points is smooth enough, it can be guaranteed that $\tilde{\mu}$ obtained from the function estimator satisfies $\forall x \in \mathrm{proj}(\hat{\mathbb{W}}_{\mathrm{N\&I}}(L)), (x;\tilde{\mu}(x)) \in \hat{\mathbb{W}}_{\mathrm{N\&I}}(L)$. 

\subsection{Enlargement of closed-loop DOA by selecting Lyapunov function}

It is observed that, for different Lyapunov functions, the negative-definite and invariant sets $\mathbb{W}_{\mathrm{N\&I}}(L)$ are totally different. Hence, if a positive-definite function set rather than a Lyapunov function is given, a significantly large estimate of the closed-loop DOA may be obtained by selecting an appropriate Lyapunov function from the positive-definite function set. Based on this idea, the following optimization problem is formulated.
\begin{equation}
\max_{L \in \mathfrak{L}_{n,2d}} \mathfrak{m}(\mathrm{proj}(\hat{\mathbb{W}}_{\mathrm{N\&I}}(L))), \label{eq:optim}
\end{equation}

where $\mathfrak{m}(\mathrm{proj}(\hat{\mathbb{W}}_{\mathrm{N\&I}}(L)))$ denotes the Lebesgue measure of $\mathrm{proj}(\hat{\mathbb{W}}_{\mathrm{N\&I}}(L))$ (in Euclidean space, it is the volume) and $\mathfrak{L}_{n,2d}$ is a subset of all sum-of-square polynomials \citep{Packard:2010_18} in $n$ variables with degree $\leq 2d$, which is defined as
\begin{equation}
\mathfrak{L}_{n,2d} = \Big\{L \in \mathfrak{R}_{n,2d} \Big| L(x) = s^T_d(x)P^TPs_d(x), x \in \mathbb{R}^n\Big\}, \nonumber
\end{equation}
where $\mathfrak{R}_{n,2d}$ denotes the set of all polynomials in $n$ variables with degree $\leq 2d$, $P \in \mathbb{R}^{r \times r}$ has full rank,

\begin{equation}
s_d(x) = (x_{(1)};\cdots;x_{(n)};x_{(1)} x_{(2)};\cdots;x^d_{(n)}) \in \mathbb{R}^r, \nonumber
\end{equation}
and $r = \left(
\begin{smallmatrix}
\scriptscriptstyle n+d \\
\scriptscriptstyle d
\end{smallmatrix} \right) - 1$.

Although the estimate of the closed-loop DOA is $\mathrm{proj}(\hat{\mathbb{W}}_{\mathrm{N\&I}}(L))$ $\cup \mathbb{X}_0$ from Section \ref{sec:estimate_DOA}, the objective of optimization problem~\eqref{eq:optim} is focused on enlarging the volume of $\mathrm{proj}(\hat{\mathbb{W}}_{\mathrm{N\&I}}(L))$ due to that $\mathbb{X}_0$ is much small. Two points should be noted about optimization problem~\eqref{eq:optim}. First, for the given positive-definite function $L$, the volume of $\mathrm{proj}(\hat{\mathbb{W}}_{\mathrm{N\&I}}(L))$ is easy to calculated, since $\mathrm{proj}(\hat{\mathbb{W}}_{\mathrm{N\&I}}(L))$ consists of boxes. Second, the positive-definite function $L$ is selected from a parameterized polynomial function set with the parameters $P \in \mathbb{R}^{r \times r}$. We defined function $m: \mathbb{R}^{r \times r} \to \mathbb{R}$ as $m(P) = \mathfrak{m}(\mathrm{proj}(\hat{\mathbb{W}}_{\mathrm{N\&I}}(L)))$, where $L(x) = s^T_d(x)P^TPs_d(x)$ and $\hat{\mathbb{W}}_{\mathrm{N\&I}}(L)$ can be obtained by Algorithm~\ref{alg:est_W_{N&I}_(L)}. With the function $m(P)$, the optimization problem~\eqref{eq:optim} can be equivalently rewritten as
\begin{equation}
\max_{P \in \mathbb{R}^{r \times r}} m(P). \label{eq:optim_P}
\end{equation}
The analytical expression of $m(P)$ is hard to be derived, but it is easy to evaluate $m(P)$ for a given $P$. Hence, classic optimization methods, e.g., gradient descent method, cannot be used to solve the optimization problem~\eqref{eq:optim_P}. However, meta-heuristic optimization methods can be used to solve the optimization problem~\eqref{eq:optim_P}, whose advantage is that the function to be optimized is only required to be evaluable. Popular meta-heuristic optimizers for real-valued search-spaces include particle swarm optimization, differential evolution and evolution strategies. There are lots of literatures about meta-heuristic optimizers, so we omit the introduction about them in this paper.

\section{Simulation}

Consider the plant 
\begin{equation*}
x(k+1) = -\sin(2x(k)) - x(k)u(k) - 0.2x(k) - u^2(k) + u(k),
\end{equation*}
where $x(k) \in \mathbb{R}$ and $u(k) \in \mathbb{R}$. The constrains set in Assumption~\ref{assum:cons_set} is $\mathbb{W}_{\mathrm{cons}} = [-2,2] \times [-2,2] \subset \mathbb{R}^2$.

\subsection{Stabilization with closed-loop DOA estimation for given Lyapunov function}

\begin{figure}
	\begin{center}
		\includegraphics[width=0.236\textwidth]{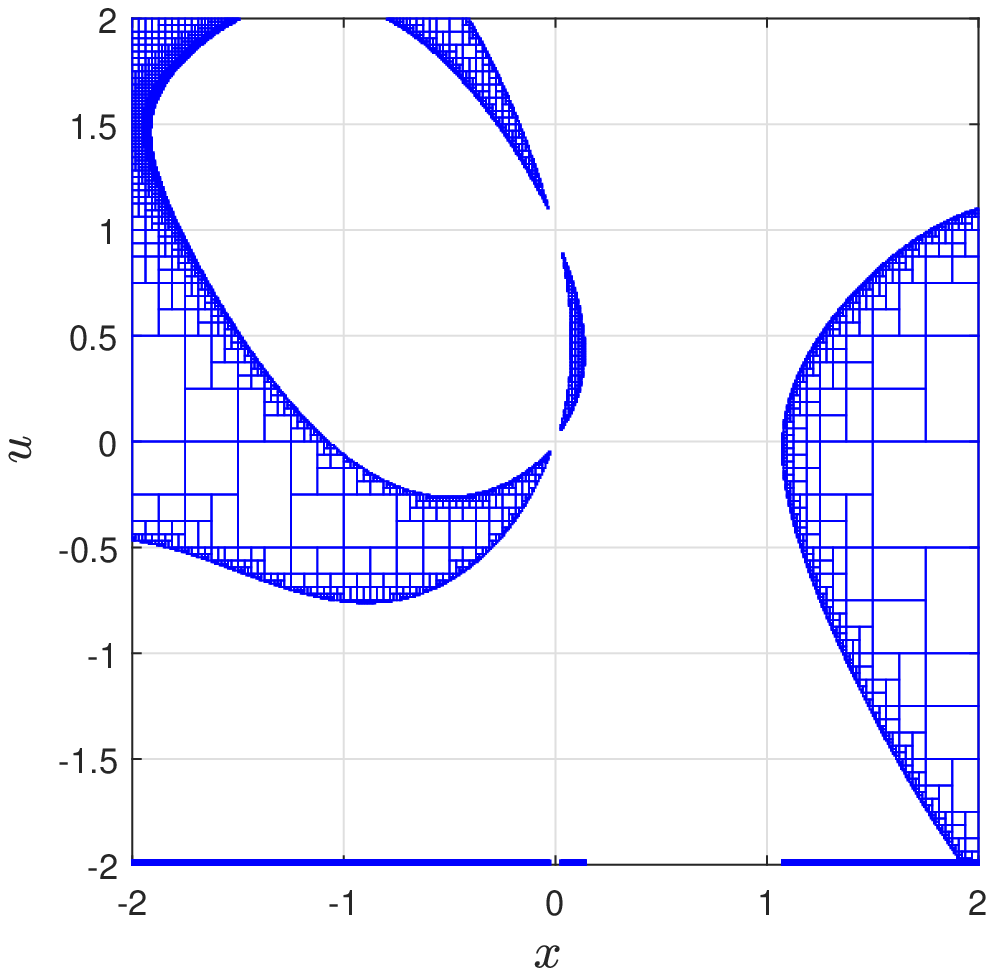}
		\includegraphics[width=0.236\textwidth]{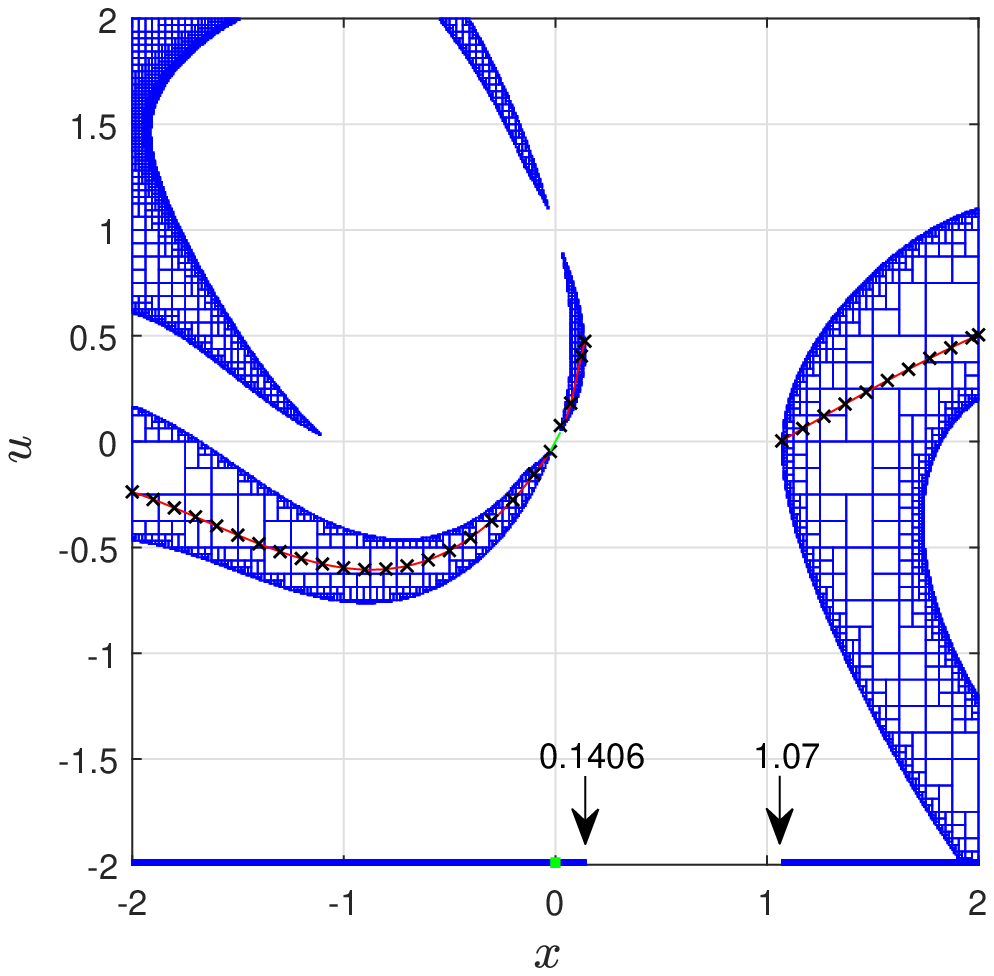} \\
		\parbox[c]{0.236\textwidth}{\footnotesize \centering (a)}
		\parbox[c]{0.236\textwidth}{\footnotesize \centering (b)}\\
		\includegraphics[width=0.236\textwidth]{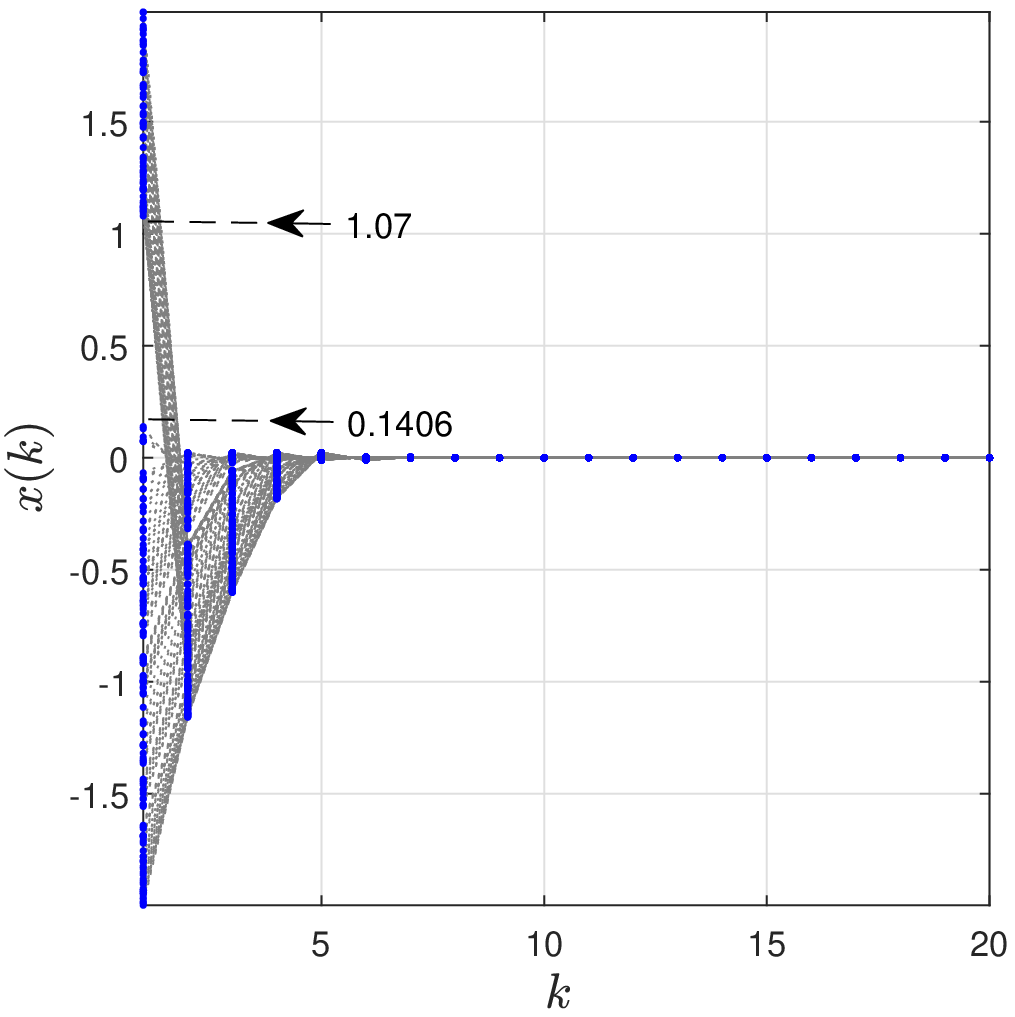}
		\includegraphics[width=0.236\textwidth]{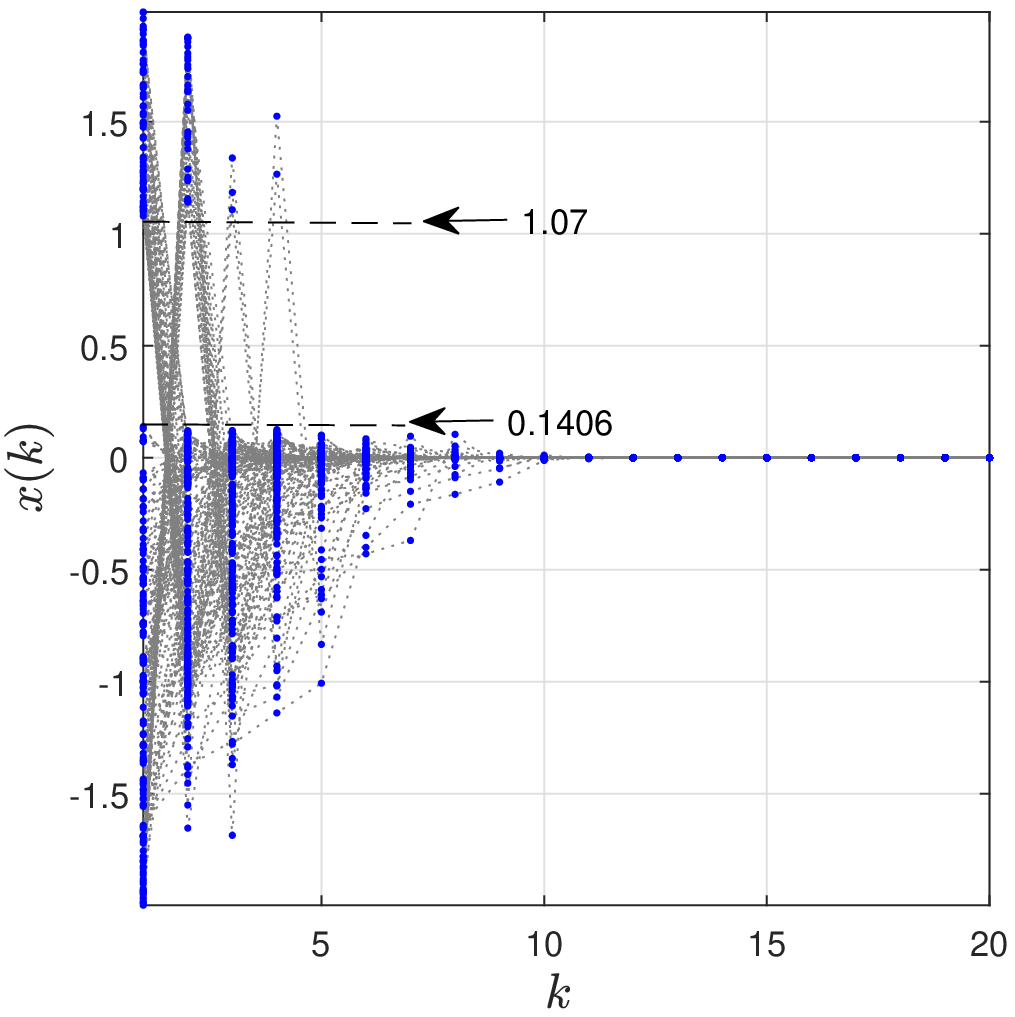}\\
		\parbox[c]{0.236\textwidth}{\footnotesize \centering (c)}
		\parbox[c]{0.236\textwidth}{\footnotesize \centering (d)}
		\caption{(a) Inner approximation $\hat{\mathbb{W}}_{\mathrm{N}}(L)$ of the negative-definite set $\mathbb{W}_{\mathrm{N}}(L)$. (b) Inner approximation $\hat{\mathbb{W}}_{\mathrm{N}\&\mathrm{I}}(L)$ of the negative-definite and invariant set $\mathbb{W}_{\mathrm{N}\&\mathrm{I}}(L)$, estimate of the closed-loop DOA $\mathrm{proj}(\hat{\mathbb{W}}_{\mathrm{N\&I}}(L)) \cup \mathbb{X}_0$ and controller $u = \mu(x)$. (c) Trajectories of the closed-loop with the controller $u = \mu(x)$. (d) Trajectories of the closed-loop with control inputs drawn from the uniform distribution on $\mathbb{U}(x)$.} \label{fig:exmp:x^2}
	\end{center}
\end{figure}

The given Lyapunov function is selected as $L(x) = x^2$ and the constant $\alpha$ in \eqref{eq:ndd_Delta_L} is selected as $\alpha = 10^{-15}$. In Algorithm~\ref{alg:est_W_{N&I}_(L)}, the parameter $\epsilon$ is selected as $\epsilon = 0.01$. The inner approximation $\hat{\mathbb{W}}_{\mathrm{N}}(L)$ of the negative-definite set $\mathbb{W}_{\mathrm{N}}(L)$ is shown in Fig.~\ref{fig:exmp:x^2} (a) denoted by blue boxes. $\mathrm{proj}(\hat{\mathbb{W}}_{\mathrm{N}}(L)) = [-2,-0.02344] \cup [0.02344,0.1406] \cup [1.07,2]$ is also shown in  Fig.~\ref{fig:exmp:x^2} (a) denoted by blue line segments in $x$-axis. The inner approximation $\hat{\mathbb{W}}_{\mathrm{N}\&\mathrm{I}}(L)$ of the negative-definite and invariant set $\mathbb{W}_{\mathrm{N}\&\mathrm{I}}(L)$ is shown in Fig.~\ref{fig:exmp:x^2} (b) denoted by blue boxes. $\mathrm{proj}(\hat{\mathbb{W}}_{\mathrm{N}\&\mathrm{I}}(L)) = [-2,-0.02344] \cup [0.02344,0.1406] \cup [1.07,2]$ is also shown in  Fig.~\ref{fig:exmp:x^2} (b) denoted by blue line segments in $x$-axis. The small neighborhood $\mathbb{X}_0$ in \eqref{eq:controller} of the origin is $[-0.02344,0.02344]$ as shown in Fig.~\ref{fig:exmp:x^2} (b) denoted by the green line segment in $x$-axis. Hence, the estimate of the closed-loop DOA is $\mathrm{proj}(\hat{\mathbb{W}}_{\mathrm{N\&I}}(L)) \cup \mathbb{X}_0 = [-2,0.1406] \cup [1.07,2]$. When the invariant of the estimate of the closed-loop DOA is guaranteed by the level-set of $L(x) = x^2$, \textit{e.g.}, the method in \cite{Li:2014_79}, the result is $[-0.1406,0.1406]$.

The linear controller in \eqref{eq:controller} is $u = 1.8649x$ denoted by the green straight line through the origin in Fig.~\ref{fig:exmp:x^2} (b). In order to find the nonlinear controller $\tilde{\mu}$ in \eqref{eq:controller}, we select a training data set denoted by black 'x's in Fig.~\ref{fig:exmp:x^2} (b). Then, $\tilde{\mu}$ is obtained using Gaussian processes regression, denoted by red line in Fig.~\ref{fig:exmp:x^2} (b). Fig.~\ref{fig:exmp:x^2} (c) shows 200 state trajectories of the closed-loop, whose initial states are drawn from the uniform distribution on $[-2,0.1406] \cup [1.07,2]$. We see that all state trajectories converge to the origin.

To verify whether all controllers belonging to $\hat{\mathbb{W}}_{\mathrm{N}\&\mathrm{I}}(L)$ can stabilize the plant, Fig.~\ref{fig:exmp:x^2} (d) also shows 200 state trajectories of the closed-loop, whose initial states are drawn from the uniform distribution on $[-2,0.1406] \cup [1.07,2]$. Here, the linear controller is still $u = 1.8649x$, while the output of the nonlinear controller is drawn from the uniform distribution on $\mathbb{U}(x) \subset \mathbb{R}$. For the given $x \in \mathbb{R}$, $\mathbb{U}(x)$ is defined as $\mathbb{U}(x) = \big\{u \in \mathbb{R} | (x;u) \in \hat{\mathbb{W}}_{\mathrm{N}\&\mathrm{I}}(L) \big\}$. We see that all state trajectories converge to the origin.

State trajectories shown in Fig.~\ref{fig:exmp:x^2} (c) and (d) also verify that $\hat{\mathbb{W}}_{\mathrm{N}\&\mathrm{I}}(L) \cup \{(x;u)|u = 1.8649x, x \in \mathbb{X}_0 \}$ is invariant for the plant. We know that $\mathrm{proj}(\hat{\mathbb{W}}_{\mathrm{N}\&\mathrm{I}}(L) \cup \{(x;u)|u = 1.8649x, x \in \mathbb{X}_0 \}) = \mathrm{proj}(\hat{\mathbb{W}}_{\mathrm{N}\&\mathrm{I}}(L)) \cup \mathbb{X}_0 = [-2,0.1406] \cup [1.07,2]$. From the figures, we see that there is no state in $(0.1406,1.07) \subset \mathbb{R}$.

\subsection{Stabilization with closed-loop DOA enlargement}

\begin{figure}
	\begin{center}
		\includegraphics[width=0.236\textwidth]{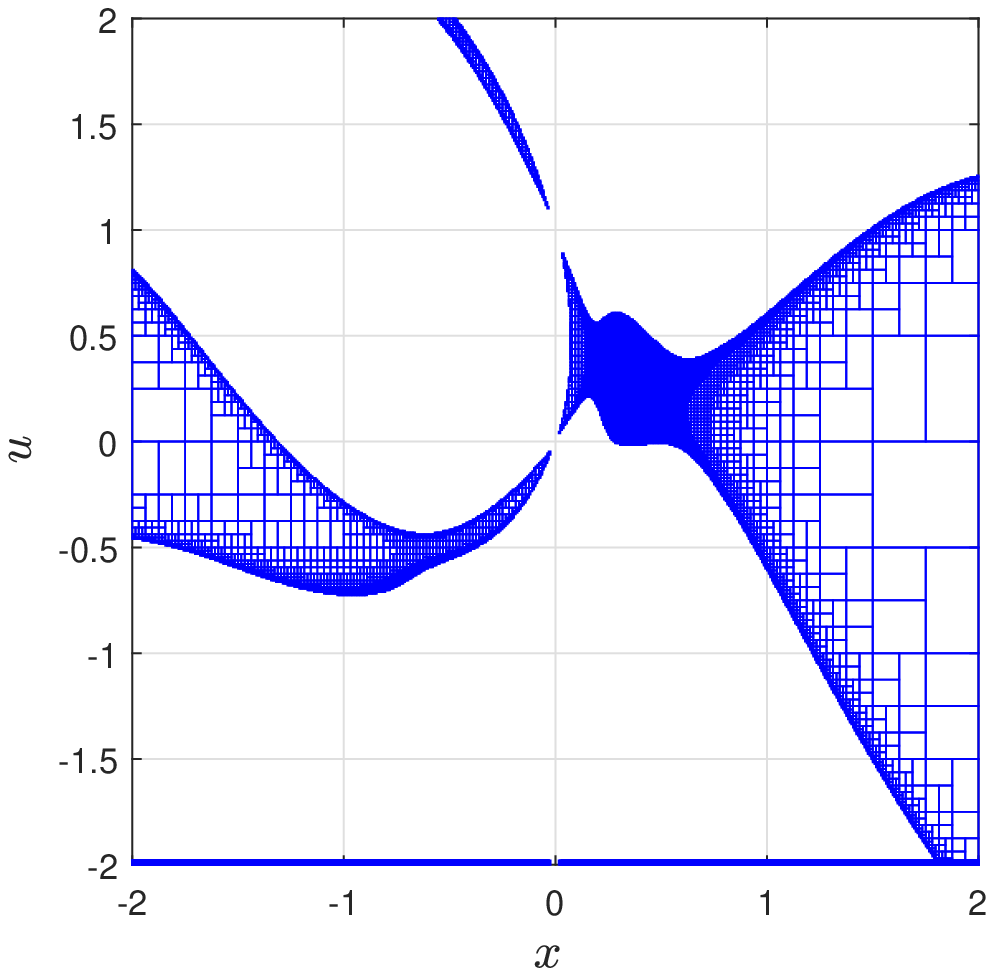}
		\includegraphics[width=0.236\textwidth]{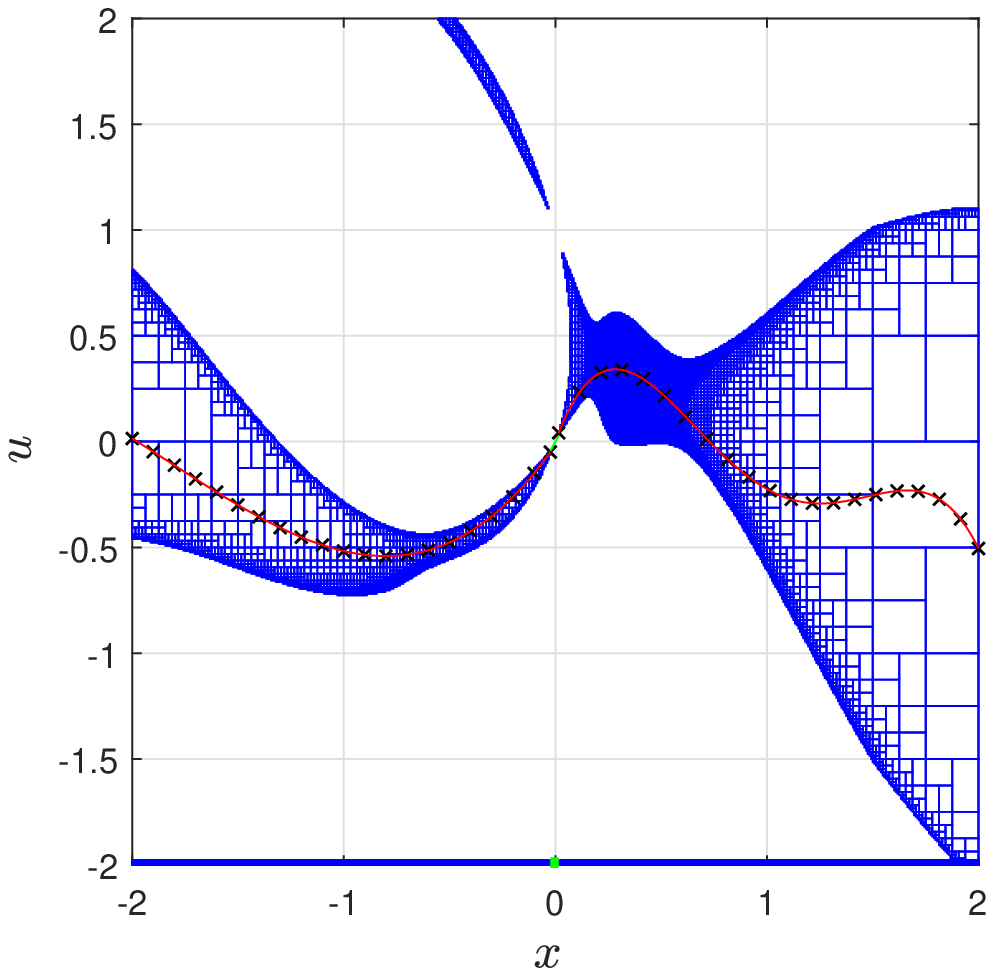} \\
		\parbox[c]{0.236\textwidth}{\footnotesize \centering (a)}
		\parbox[c]{0.236\textwidth}{\footnotesize \centering (b)}\\
		\includegraphics[width=0.236\textwidth]{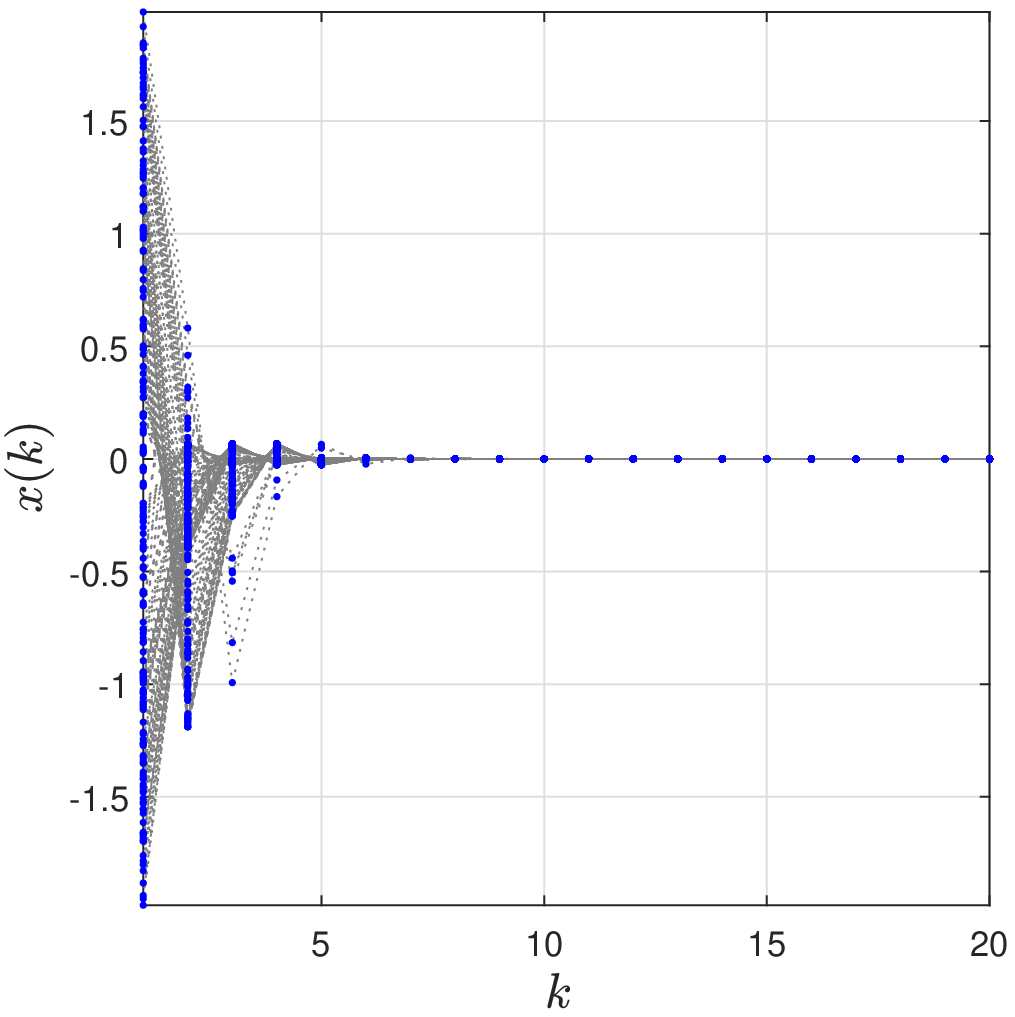}
		\includegraphics[width=0.236\textwidth]{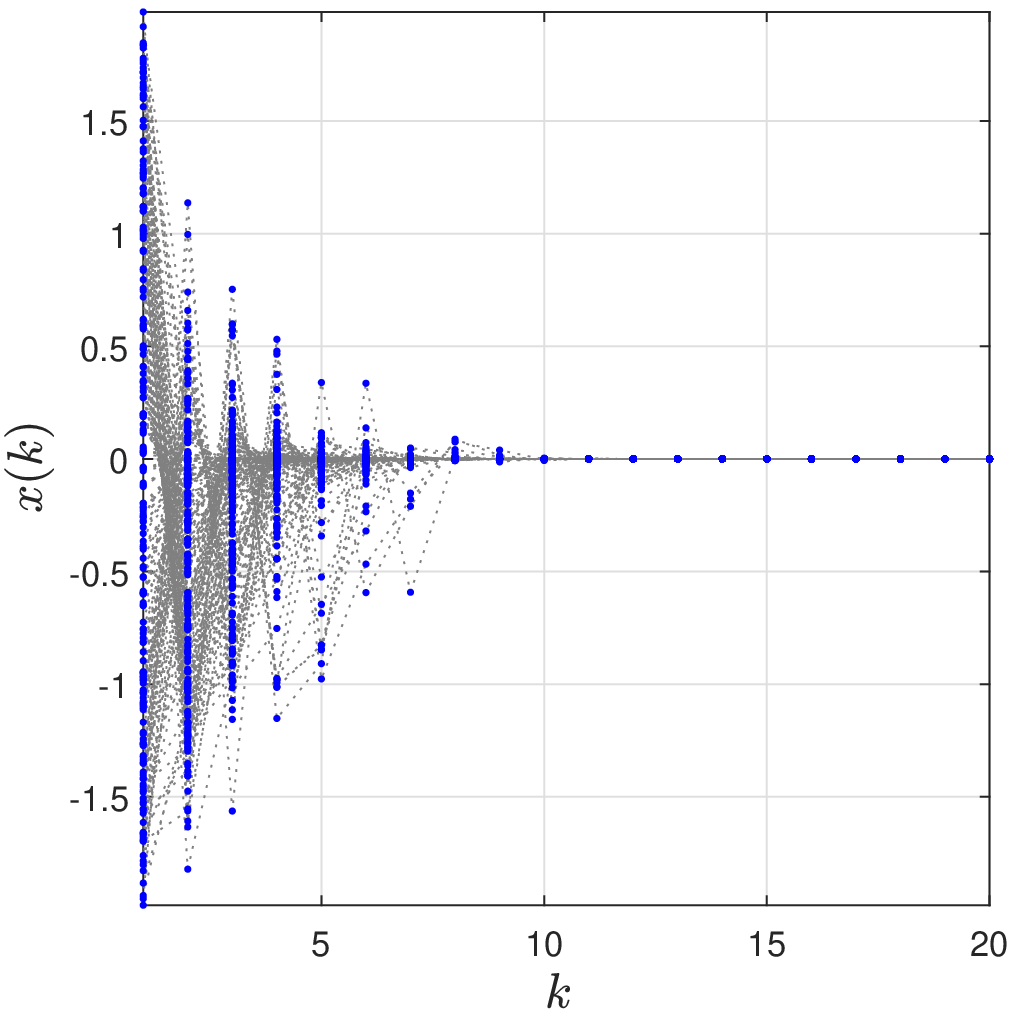}\\
		\parbox[c]{0.236\textwidth}{\footnotesize \centering (c)}
		\parbox[c]{0.236\textwidth}{\footnotesize \centering (d)}
		\caption{(a) Inner approximation $\hat{\mathbb{W}}_{\mathrm{N}}(L^\ast)$ of the negative-definite set $\mathbb{W}_{\mathrm{N}}(L^\ast)$. (b) Inner approximation $\hat{\mathbb{W}}_{\mathrm{N}\&\mathrm{I}}(L^\ast)$ of the negative-definite and invariant set $\mathbb{W}_{\mathrm{N}\&\mathrm{I}}(L^\ast)$, estimate of the closed-loop DOA $\mathrm{proj}(\hat{\mathbb{W}}_{\mathrm{N\&I}}(L^\ast)) \cup \mathbb{X}_0$ and controller $u = \mu(x)$. (c) Trajectories of the closed-loop with the controller $u = \mu(x)$. (d) Trajectories of the closed-loop with control inputs drawn from the uniform distribution on $\mathbb{U}(x)$.} \label{fig:exmp:opt}
	\end{center}
\end{figure}

The positive-definite function set in optimization problem \eqref{eq:optim} is selected as
\begin{equation}
\mathfrak{L}_{1,4} = = \Big\{L \in \mathfrak{R}_{1,4} \Big| L(x) = (x;x^2)^TP^TP(x;x^2), x \in \mathbb{R}\Big\} \nonumber
\end{equation}
with the parameters $P \in \mathbb{R}^{2 \times 2}$. The optimization problem \eqref{eq:optim_P} is solved through the particle swarm optimization method and the solution $L^\ast(x) = 2.4468x^2 + 3.4186x^3 + 1.4524x^4$ is obtained.

The inner approximation $\hat{\mathbb{W}}_{\mathrm{N}}(L^\ast)$ of the negative-definite set $\mathbb{W}_{\mathrm{N}}(L^\ast)$ is shown in Fig.~\ref{fig:exmp:opt} (a) denoted by blue boxes. The inner approximation $\hat{\mathbb{W}}_{\mathrm{N}\&\mathrm{I}}(L^\ast)$ of the negative-definite and invariant set $\mathbb{W}_{\mathrm{N}\&\mathrm{I}}(L^\ast)$ is shown in Fig.~\ref{fig:exmp:opt} (b) denoted by blue boxes. $\mathrm{proj}(\hat{\mathbb{W}}_{\mathrm{N}\&\mathrm{I}}(L^\ast)) = [-2,-0.02344] \cup [0.01563,2]$ is also shown in  Fig.~\ref{fig:exmp:opt} (b) denoted by blue line segments in $x$-axis. The small neighborhood $\mathbb{X}_0$ in \eqref{eq:controller} of the origin is $[-0.02344,0.01563]$ as shown in Fig.~\ref{fig:exmp:opt} (b) denoted by the green line segment in $x$-axis. Hence, the estimate of the closed-loop DOA is $\mathrm{proj}(\hat{\mathbb{W}}_{\mathrm{N\&I}}(L^\ast)) \cup \mathbb{X}_0 = [-2,2]$. When the invariant of the estimate of the closed-loop DOA is guaranteed by the level-set of $L^\ast(x)$, \textit{e.g.}, the method in \cite{Li:2014_79}, the result is $[-2,0.9145]$.

The linear controller in \eqref{eq:controller} is $u = 1.8649x$ denoted by the green straight line through the origin in Fig.~\ref{fig:exmp:opt} (b). In order to find the nonlinear controller $\tilde{\mu}$ in \eqref{eq:controller}, we select a training data set denoted by black 'x's in Fig.~\ref{fig:exmp:opt} (b). Then, $\tilde{\mu}$ is obtained using Gaussian processes regression, denoted by red line in Fig.~\ref{fig:exmp:opt} (b). Fig.~\ref{fig:exmp:opt} (c) shows 200 state trajectories of the closed-loop, whose initial states are drawn from the uniform distribution on $[-2,2]$. We see that all state trajectories converge to the origin.

To verify whether all controllers belonging to $\hat{\mathbb{W}}_{\mathrm{N}\&\mathrm{I}}(L^\ast)$ can stabilize the plant, Fig.~\ref{fig:exmp:opt} (d) also shows 200 state trajectories of the closed-loop, whose initial states are drawn from the uniform distribution on $[-2,2]$. Here, the linear controller is still $u = 1.8649x$, while the output of the nonlinear controller is drawn from the uniform distribution on $\mathbb{U}(x) \subset \mathbb{R}$. For the given $x \in \mathbb{R}$, $\mathbb{U}(x)$ is defined as $\mathbb{U}(x) = \big\{u \in \mathbb{R} | (x;u) \in \hat{\mathbb{W}}_{\mathrm{N}\&\mathrm{I}}(L^\ast) \big\}$. We see that all state trajectories converge to the origin.

\section{Conclusion} \label{sec:conclusion}

For general nonlinear plants, due to the difficulty to achieve the global stabilization, the DOA of the closed-loop requires extensive investigation. In this paper, considering discrete-time general nonlinear plants, the stabilization with closed-loop DOA enlargement is solved. First, a sufficient condition for asymptotic stabilization and estimation of the closed-loop DOA is given, which shows that, for a given Lyapunov function, the negative-definite and invariant set in the state-control space is a stabilizing controller set and its projection along the control space to the state space can be an estimate of the closed-loop DOA. Then, an algorithm is proposed to approximate the negative-definite and invariant set for the given Lyapunov function, in which SIVIA algorithm is used to find inner approximations of interesting sets as precise as desired. Finally, a solvable optimization problem is formulated to enlarge the estimate of the closed-loop DOA by selecting an appropriate Lyapunov function from a positive-definite function set. Some examples are also includes to verify the effectiveness of the proposed method.

\section{Acknowledgments}

The authors would like to thank Dr. Yinan Li for helpful discussions.

\end{document}